\pgfplotsset{compat=1.18}
\newtheorem{theorem}{Theorem}[section]
\newtheorem{lemma}[theorem]{Lemma}
\newtheorem{claim}[theorem]{Claim}
\theoremstyle{definition}
\newtheorem{definition}[theorem]{Definition}
\theoremstyle{remark}
\renewcommand{\nicefrac}[2]{\frac{#1}{#2}}
\title{Optimal Online Bipartite Matching in Degree-2 Graphs}
\author{
Amey Bhangale\thanks{University of California, Riverside, USA. \href{mailto:ameyb@ucr.edu}{ameyb@ucr.edu}. Supported by Hellman Fellowship award \& NSF CAREER award 2440882.}
\and
Arghya Chakraborty\thanks{Tata Institute of Fundamental Research, Mumbai, India. \href{mailto:arghya314@yahoo.com}{arghya314@yahoo.com}. Supported in part by Prof. Mrinal Kumar's SERB grant CRG/2023/006433. Research of the second and third authors supported by the Department of Atomic
Energy, Government of India, under Project Identification No. RTI4001.}
\and
Prahladh Harsha\thanks{Tata Institute of Fundamental Research, Mumbai, India. \href{mailto:prahladh@tifr.res.in}{prahladh@tifr.res.in}. Supported in part by the Google India Research Award.}
}
\date{\today}
\begin{document}
\maketitle
\begin{abstract}  
Online bipartite matching is a classical problem in online algorithms and we know that both the deterministic fractional and randomized integral online matchings achieve the same competitive ratio of $1-\nicefrac1e$. In this work, we study classes of graphs where the online degree is restricted to $2$. As expected, one can achieve a competitive ratio of better than $1-\nicefrac1e$ in both the deterministic fractional and randomized integral cases, but surprisingly, these ratios are not the same. It was already known that for fractional matching, a $0.75$ competitive ratio algorithm is optimal. We show that the folklore \textsc{Half-Half} algorithm achieves a competitive ratio of $\eta \approx 0.717772\dots$ and more surprisingly, show that this is optimal by giving a matching lower-bound. This yields a separation between the two problems: deterministic fractional and randomized integral, showing that it is impossible to obtain a perfect rounding scheme.

\end{abstract}
\textbf{Keywords:} Online Algorithm, Bipartite Matching

\section{Introduction}


    



The problem of Bipartite Matching involves a bipartite graph $G=(L,R,E)$ with vertices $L\cup R$ and edges $E\subseteq L\times R$. The objective is to find the largest sized matching in $G$ - in other words, the largest subset $M\subseteq E$ such that no two edges in $M$ share a common vertex. This problem has been well studied \cite{Kuhn1955,Berge1957} and polynomial time algorithms are known.

Online Bipartite Matching was first introduced by Karp, Vazirani, and Vazirani~\cite{KarpVV1990}. In the online version of the problem, the set \( L \) is known in advance, but the vertices in \( R \) arrive one by one. Each time a vertex \( j \in R \) arrives (along with its edges), we must make an irreversible decision on whether to match \( j \) with one of its neighboring vertices \( i \in N(j) \subseteq L \). In a fractional version, we can assign fractional weights to the edges satisfying the matching constraints. The competitive ratio of an online matching algorithm is simply the approximation ratio achieved by the algorithm, that is, the worst-case ratio between the cost of the matching found by the algorithm and the cost of an optimal matching.  Online Bipartite Matching has found immense applications in real-world problems. AdWords, organ matching, and online dating are some of the few areas with direct applications. 

The seminal work of Karp, Vazirani, and Vazirani showed that if an unweighted graph has an optimal matching of size $n$ then no online algorithm can guarantee a matching of size larger than $(1-\nicefrac{1}{e})\cdot n$, in expectation. They also gave an algorithm, called RANKING, that attains the same competitive ratio of $(1-\nicefrac{1}{e})$ on unweighted graphs.

Many variants of this problem have also been studied. The survey by Mehta~\cite{Mehta2013} provides an excellent overview of the problem and its variants. Among the various generalizations studied, there is the edge arrival model \cite{GamlathKMSW2019}, where the edges arrive in an online fashion instead of the vertices. Recently, there have been breakthroughs~\cite{FahrbachHTZ2022,ShinA2021,GaoHHNYZ2021,BlancC2021} in the setting where the edges are weighted (one has to make additional assumptions here, as in this case, no competitive algorithm exists without an additional assumption). Other variants, like stochastic weights \cite{HaeuplerMZ2011,EzraFGT2020}, ad words \cite{MehtaSVV2007,GoelM2008,DevanurH2009}, display ads \cite{FeldmanMMM2009,CharlesCDJS2010,FeldmanHKMS2010}, online sub-modular welfare maximization \cite{LehmannLN2006,ChakrabartyG2010} have been studied. Most of these edge-weighted settings require either a stochastic input \cite{HaeuplerMZ2011,GoelM2008} or a free disposal model \cite{FeldmanKMMP2009,FahrbachHTZ2022,ShinA2021,BlancC2021}.

For all these matching-related problems, one may consider the polytope of matchings. Randomized algorithms naturally give rise to a fractional matching in this polytope. Hence, the problem of fractional matching is of extreme importance and has also been well studied. The WATER-LEVEL Algorithm \cite{CharlesCDJS2010} is a well-studied deterministic fractional matching algorithm, attaining a competitive ratio of $(1-\nicefrac{1}{e})$, and this is optimal. Almost all problems related to matching have this feature, which is that the optimal competitive ratio for randomized integral algorithms is the same as the optimal competitive ratio for fractional matching algorithms. Indeed, randomized integral algorithms inherently produce a probability distribution over matchings, and this probability distribution may be viewed as a fractional matching.

However, given that the competitive ratio is the same for all these related problems, it is interesting to ask if the fractional and the randomized integral matchings are inherently the same problems. Hence, people have tried to obtain integral matchings as a randomized rounding of fractional matchings \cite{BuchbinderNW2023}. In fact, for the edge weighted case, all the known algorithms \cite{FahrbachHTZ2022,ShinA2021,GaoHHNYZ2021,BlancC2021}, first produce a fractional matching and then use (randomized) Online Correlated Selection (OCS) algorithm to perform the task of online rounding. Therefore, it is interesting to ask whether the fractional and randomized integral matching problems are the same for other variants of matching. 

\noindent \textbf{Bounded degree graphs.} All the candidate graphs that prove the lower bound of $(1-\nicefrac{1}{e})$ competitive ratio for randomized integral or fractional matching have the feature that they contain very high-degree vertices. It is natural to ask whether one can construct low-degree graphs to obtain the same lower bounds. There have been some works on this \cite{AlbersS2022,albers_schubert_2022,CohenP2023,cohen_wajc,NaorW2018,BuchbinderJN2007,AzarCR2017}. In this paper, we study graphs where the online vertices have a degree at most $2$. We show that, indeed, algorithms can perform better in this particular setting. More specifically, we prove the following theorem.

\begin{theorem} [Integral Matching]
\label{thm:main_integral}
For any graph, $G$, with $n$ online vertices and degree of online vertices at most $2$, there exists an online randomized integral matching algorithm with competitive ratio $\eta$, and no online algorithm can attain a competitive ratio better than $\eta$,
\[\text{where }\eta := 1 - \sum_{i=1}^{\infty} \frac{1}{2^{{2^i} + i - 1}} \approx 0.717772.\]
\end{theorem}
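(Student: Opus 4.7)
The plan is to prove the theorem in two parts: an upper bound showing that the folklore Half-Half algorithm achieves competitive ratio $\eta$, and a matching lower bound showing that no online randomized algorithm can beat $\eta$.

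For the upper bound, I would formally define the Half-Half algorithm (match each arriving online vertex uniformly at random among its currently unmatched neighbors, leaving it unmatched if none is available) and then exploit the fact that, since online vertices have degree at most $2$, they can be viewed as edges of a multigraph on the offline vertex set; up to degenerate cases, each connected component is then a path or a cycle. This reduces the analysis to the worst case on a single path or cycle under the worst-case arrival order. Letting $p_a$ denote the probability that offline vertex $a$ is ultimately matched, I would set up a local recurrence for $p_a$ along the path; solving it and bounding $\sum_a p_a$ against the maximum matching on the component should, after telescoping, reproduce exactly the series $1 - \sum_{i \geq 1} 2^{-(2^i + i - 1)} = \eta$ in the limiting (long-path) instance.

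For the lower bound, the plan is to construct an adaptive adversarial family $\{G_k\}$ of instances such that no randomized online algorithm achieves more than $\eta_k := 1 - \sum_{i=1}^{k} 2^{-(2^i + i - 1)}$ of the optimal matching in expectation on $G_k$. The construction proceeds in $k$ stages: at stage $i+1$ the adversary inspects the algorithm's probability distribution over partial matchings produced so far and presents a batch of online vertices whose neighborhoods punish offline vertices to which the algorithm has already committed with high probability. The doubly-exponential exponent $2^i$ in the loss term should arise because stage $i$ requires the adversary to maintain $\Theta(2^{2^{i-1}})$ possible futures, one per coarse equivalence class of algorithmic trajectories, while the additional factor $2^{i-1}$ corresponds to the per-stage granularity of the algorithm's probabilistic commitments. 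A Yao-minimax argument then upgrades this adaptive lower bound to one against arbitrary randomized algorithms on a worst-case distribution over input sequences, and letting $k \to \infty$ yields $\eta$.

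The main obstacle will be the lower bound. Designing the stagewise construction so that the loss at stage $i$ is exactly $2^{-(2^i + i - 1)}$ --- neither more nor less --- and showing that no algorithm, however clever, can dodge that loss, is what turns an approximate matching into the sharp equality between the two bounds. Matching the constants precisely between the Half-Half analysis and the adversary thus lies at the technical heart of the proof, and is presumably what forces the intricate adaptive, doubly-exponentially growing structure of the lower-bound instances.
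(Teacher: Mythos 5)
There is a genuine gap in both halves of the plan.

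\textbf{Upper bound.} Your key structural claim --- that because online vertices have degree at most~$2$, the multigraph on the offline vertex set decomposes into paths and cycles --- is false. The degree bound applies only to online vertices; offline vertices can have arbitrarily large degree, so the auxiliary multigraph can be any multigraph whatsoever. Indeed, the lower-bound instance the paper constructs (and that any tight analysis must handle) has a binary-tree-like recursive structure in which certain offline vertices have degree $\Theta(\log n)$. A recurrence for $p_a$ along a path therefore does not see the hard case. The paper instead runs an online primal--dual analysis: it tracks primal variables $x_i$ that always take the form $1-2^{-k}$, shows the expected matching size dominates the primal, and sets dual variables $\alpha_{(k)}$ via a delicate induction on $k$ (with the special values occurring at $k=2^m-1$), proving $\eta$-approximate dual feasibility throughout. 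Nothing in your plan substitutes for that machinery, and the path/cycle reduction would, if carried out, silently restrict attention to an easier subclass of instances.

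\textbf{Lower bound.} Your instinct to invoke Yao's principle is right, and you even name the correct quantity $\eta_k$, but the mechanism you describe does not match how the exponent arises, and as described it would be very hard to make sharp. The paper does not use an adaptive adversary or equivalence classes of trajectories. It fixes one graph built in $\log n$ phases (phase $i$ has $n/2^i$ online vertices, each connecting an offline vertex that appeared in phase $i-1$ to a fresh ``competitor'' from the same pool) and then takes the uniform distribution over relabelings of the offline side. Against this distribution, the probability $q_i$ that an offline vertex is still unmatched after its last appearance in phase $i$ obeys the squaring recursion $q_i = \tfrac12\, q_{i-1}^2$ with $q_1 = \tfrac12$, because to survive phase $i$ both it and its uniformly-random competitor must have survived phase $i-1$, and it must then lose the tie-break; this gives $q_i = 2^{-(2^i-1)}$. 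Multiplying by the $n/2^i$ vertices whose last appearance is in phase $i$ gives the loss term $n\cdot 2^{-(2^i+i-1)}$. Your explanation of the exponent --- $\Theta(2^{2^{i-1}})$ trajectory classes times $2^{i-1}$ granularity --- is not where the doubly-exponential decay comes from, and an adaptive adversary inspecting the algorithm's distribution would break the clean Yao argument (Yao's principle requires a fixed input distribution, not one that reacts to the algorithm's internal randomness).
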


This is interesting, in contrast to the following known result :

\begin{theorem} [Fractional Matching]
\label{thm:main_fractional}
For any graph, $G$, with $n$ online vertices and degree of online vertices at most $2$, the Water-Level Algorithm (c.f, \cref{water}) attains a competitive ratio of $0.75$ and no online algorithm can obtain a better competitive ratio.
\end{theorem}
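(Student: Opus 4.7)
My plan splits the proof into two parts: a lower bound (no online algorithm can exceed $3/4$) and an upper bound (Water-Level attains $3/4$).

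For the lower bound, I would exhibit a simple adversarial gadget. Take $L=\{l_1,l_2\}$ and first reveal online $r_1$ with $N(r_1)=\{l_1,l_2\}$. Any online fractional algorithm must irrevocably commit weights $y_{1,1},y_{1,2}$ with $y_{1,1}+y_{1,2}\le 1$; by the symmetry of $l_1,l_2$ at that moment, one may assume without loss of generality that $y_{1,1}\ge y_{1,2}$, whence $y_{1,2}\le 1/2$. The adversary then reveals $r_2$ with $N(r_2)=\{l_1\}$. Since the residual capacity at $l_1$ is only $1-y_{1,1}$, the algorithm's total matching is at most $y_{1,1}+y_{1,2}+(1-y_{1,1})=1+y_{1,2}\le 3/2$, whereas $\mathrm{OPT}=2$ (matching $r_1\leftrightarrow l_2$ and $r_2\leftrightarrow l_1$). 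Tiling $k$ disjoint copies of this gadget gives a degree-$2$ instance on $n=2k$ online vertices with the same ratio, and a randomized version of the adversary (flipping the orientation of $r_2$'s edge uniformly), combined with Yao's minimax principle, extends the bound to randomized fractional algorithms.

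For the upper bound, I would use a primal-dual analysis against the standard bipartite matching LP and its dual (fractional vertex cover). It suffices to build a feasible dual $(\alpha,\beta)$ with $\sum_i\alpha_i+\sum_j\beta_j \le (4/3)\cdot M_{\mathrm{alg}}$, since LP duality then yields $\mathrm{OPT}\le(4/3)\cdot M_{\mathrm{alg}}$. The duals are maintained online: $\alpha_i$ is an increasing function $g(x_i)$ of the current water level at offline $i$, and when online $j$ is processed, $\beta_j$ is set so that the total dual increment attributable to $j$'s arrival (namely $\beta_j$ plus the increase in $\sum_{i\in N(j)}g(x_i)$) equals $(4/3)\,y_j$. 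Telescoping then gives $\sum_i\alpha_i+\sum_j\beta_j=(4/3)\,M_{\mathrm{alg}}$. The structural content of the argument is verifying dual feasibility, $\alpha_i+\beta_j\ge 1$ for every edge $(i,j)\in E$, which reduces to a small case analysis over the qualitative pour patterns Water-Level produces when $|N(j)|\le 2$ (equalizing pour over two unsaturated neighbors, cap-blocked pour when one neighbor is already saturated, and the degree-$1$ case).

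The main obstacle is calibrating $g$ so that the feasibility inequality $\alpha_i+\beta_j\ge 1$ holds tightly at the $3/4$ ratio across all arrival patterns. The degree-$2$ restriction is exactly what permits a simple (piecewise linear or low-degree) calibration $g$, in contrast to the exponential calibration required for the general $1-1/e$ bound: with at most two neighbors, the spread between dual contributions across $N(j)$ is concretely bounded. The most delicate case is when $j$ has two unsaturated neighbors at distinct initial levels and Water-Level's pour is asymmetric; here one must also ensure that dual feasibility at the less-poured neighbor survives all later arrivals, which raise $\alpha_i$ monotonically while $\beta_j$ stays fixed.
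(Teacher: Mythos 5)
Your plan matches the paper's in both parts: the same two-offline-vertex adversarial gadget for the lower bound, and an online primal-dual analysis for the upper bound in which $\alpha_i = g(x_i)$ is a monotone function of the current water level and $\beta_j$ absorbs the remainder of the dual budget at step $j$. The scaling is cosmetically different --- the paper keeps primal $=$ dual and proves $3/4$-approximate dual feasibility $\alpha_i + \beta_j \ge 3/4$, whereas you keep $\alpha_i + \beta_j \ge 1$ and allow a $4/3$ multiplicative dual surcharge --- but these are the same argument after rescaling.

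The part you flag as the ``main obstacle,'' calibrating $g$, is exactly where your sketch is incomplete, and you miss the structural observation that makes it easy. On degree-$2$ inputs, Water-Level maintains the invariant that every offline vertex has $x_i \in \{0\} \cup [\tfrac12, 1]$ at all times: the first pour touching $i$ either equalizes two empty neighbors at $\tfrac12$, or (if the other neighbor is already at level $x \ge \tfrac12$) solves $2\ell - x = 1$, landing both at $\ell = (1+x)/2 \ge \tfrac34$, or (degree-$1$ arrival, or the other neighbor already saturated) sends $i$ straight to $1$. Once you have this, the ``delicate case'' you single out --- two unsaturated neighbors at distinct positive levels --- trivializes: both are already at level $\ge \tfrac12$, their combined shortfall is at most $1$, so Water-Level fills both to $1$. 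There are then only three arrival patterns to check: (i) both neighbors at $0$, landing at $\tfrac12$; (ii) one at $0$, the other at $x \ge \tfrac12$, both landing at $(1+x)/2$; (iii) both at $\ge \tfrac12$, both landing at $1$. The calibration $g(x) = x/2$ on $[0,\tfrac12]$ and $g(x) = x - \tfrac14$ on $[\tfrac12,1]$, with $\beta_j$ absorbing half the pour below the $\tfrac12$ mark and none above it, then gives $\alpha_i + \beta_j \ge \tfrac34$ in every case (respectively $\tfrac14 + \tfrac12$, $(\tfrac14 + x/2) + \tfrac14$, and $\tfrac34 + 0$). I would encourage you to state and prove the invariant before the case analysis: without it, your claim of a ``small case analysis over the qualitative pour patterns'' is not justified as small, and the argument as written is not a proof.
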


The analysis of the fractional matching problem has been done in \cite{BuchbinderJN2007}, but for the sake of completeness, we provide an easy proof in \cref{sec:frac}.

It is intriguing that the fractional and the randomized integral matchings seem to be different problems for this class of graphs, as the optimal competitive ratio differs for the two different settings. 

We use the Primal-Dual analysis used in prior works \cite{BuchbinderN2009,FahrbachHTZ2022,BlancC2021} on online bipartite matching (described in detail in the following section) for our algorithm. Our main technical contribution is to set the primal-dual variables in the algorithm's analysis in a non-trivial manner, allowing us to prove the optimality of our algorithm.

\subsection{Brief overview}

In \Cref{sec:prelim}, we formally define some of the notations that we will use throughout the paper, the models that we look at, and the online primal-dual framework which is a commonly used tool in the analysis of online bipartite matching algorithms. Then in \Cref{lb}, we explicitly construct a graph, and then use Yao's principle to prove a lower bound for our problem. Finally, in \Cref{ub} we use the online primal-dual framework to prove the tightness of the above bound, by analyzing a simple algorithm.

\section{Preliminaries}\label{sec:prelim}

Throughout this paper, we shall denote by $G(d_1,d_2)$ the set of graphs that have degree at most $d_1$ for the offline vertices and degree at most $d_2$ for the online vertices. Note that a bipartite graph on $n$ online vertices and $n$ offline vertices can be any graph from $G(n,n)$. We shall focus on $G(n,2)$ graphs where the online vertices may have a degree at most $2$.

We shall use $L$ to refer to the set of offline vertices and $R$ to refer to the set of online vertices. To make things easier to read, we shall use $i,i_1,i_2$ to refer to vertices in $L$ (offline vertices) and $j$ to refer to vertices in $R$ (online vertices). We shall mostly consider an online vertex $j$, arriving with exactly two neighbors $i_1$ and $i_2$, and this is without loss of generality as given by the following lemma.

\begin{lemma}\label{exactdegree}
Online Bipartite Matching on graphs in $G(n,2)$ is equivalent to graphs with degree of online vertices exactly $2$. In other words, a $c$-competitive randomized algorithm for one implies a $c$-competitive randomized algorithm for the other.
\end{lemma}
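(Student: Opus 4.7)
The plan is to prove the lemma in both directions. The easier direction is that a $c$-competitive randomized algorithm for $G(n,2)$ remains $c$-competitive on the subclass of graphs whose online vertices have degree exactly two, since this is simply a restriction to a smaller adversarial class. For the converse direction, starting from a $c$-competitive algorithm $A$ for degree-exactly-$2$ graphs, I would construct an algorithm $B$ for $G(n,2)$ as follows. Given an input graph $G \in G(n,2)$, whenever an online vertex $j$ arrives with degree $k < 2$, algorithm $B$ would introduce $2-k$ fresh dummy offline vertices (each a pendant attached only to $j$) and feed $A$ the padded arrival; degree-$2$ arrivals are forwarded verbatim. The output of $B$ on $G$ is the restriction of $A$'s output on the padded graph $G'$ to the sub-matching avoiding dummy endpoints.

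The key structural step is to show that $|\mathrm{OPT}(G')| = |\mathrm{OPT}(G)| + u$, where $u$ counts the low-degree online vertices that are unmatched in some fixed optimum of $G$. This follows from the pendant nature of the dummies: dummy edges in any matching of $G'$ can be deleted to yield a valid matching of $G$, and any matching of $G$ extends to a matching of $G'$ by pairing every unmatched low-degree online vertex with a fresh dummy. Combining this with the competitive guarantee $\mathbb{E}[|A(G')|] \geq c \cdot |\mathrm{OPT}(G')|$ and the observation that $|B(G)|$ equals $|A(G')|$ minus the number of dummy edges in $A$'s output, the goal is to establish $\mathbb{E}[|B(G)|] \geq c \cdot |\mathrm{OPT}(G)|$.

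The main obstacle I expect is controlling the dummy usage in $A$'s output, since a priori $A$ might prefer dummy over real edges and thereby shrink $B$'s matching below the target. The plan to handle this is a swap argument on $A$'s output matching: any dummy edge $(j, i_*)$ whose corresponding real neighbor $i_1$ is left unmatched can be swapped to $(j, i_1)$ without changing the matching's size, yielding a rearranged matching of the same cardinality but with minimum dummy usage. A careful counting argument would then bound the remaining ``forced'' dummies by the slack $u$ in $|\mathrm{OPT}(G')| - |\mathrm{OPT}(G)|$, closing the gap and yielding the desired competitive guarantee for $B$. Verifying that this bound is actually tight enough (rather than off by a small additive factor that would only give a competitive ratio weaker than $c$) is the technically delicate step, and where I expect most of the work to lie.
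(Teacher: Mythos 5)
Your construction differs from the paper's in a way that creates a genuine gap. You pad each degree-$1$ online vertex with its \emph{own} fresh pendant dummy, run $A$ on the padded graph $G'$, and then restrict $A$'s output to real edges. This loses for two compounding reasons. First, $|\mathrm{OPT}(G')|$ can exceed $|\mathrm{OPT}(G)|$ by $\Theta(n)$ (e.g., $n$ degree-$1$ online vertices all sharing a single offline neighbor: $\mathrm{OPT}(G)=1$ but $\mathrm{OPT}(G')=n$), so the guarantee $\mathbb{E}[|A(G')|]\geq c\cdot|\mathrm{OPT}(G')|$ is measured against an inflated benchmark. Second, and more fatally, the competitive guarantee on $G'$ places no constraint whatsoever on \emph{which} edges $A$ uses, so $A$ may freely spend its matching budget on dummy edges. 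Your swap argument only rescues a dummy edge $(j,d_j)$ when $j$'s real neighbor is \emph{unmatched}, but the problematic dummies are exactly those whose real neighbor is already matched elsewhere, and no swap is available for them. A concrete failure: offline $\{i,i'\}$, online $j'$ (neighbors $i,i'$) arrives before online $j$ (neighbor $i$ only). An $A$ that happens to match $j'\!-\!i$ then $j\!-\!d_j$ achieves $|A(G')|=2=\mathrm{OPT}(G')$, hence is $1$-competitive on this padded instance, yet your $B$ outputs only $j'\!-\!i$, so $|B(G)|=1$ while $\mathrm{OPT}(G)=2$; the swap cannot fire because $i$ is already taken. Chaining many such gadgets shows your $B$ can have competitive ratio as low as $1/2$ on $G(n,2)$ even though $A$ is fully $c$-competitive on degree-exactly-$2$ graphs, so ``verifying the bound is tight enough'' cannot succeed as stated. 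The paper's proof avoids both problems by a different padding: it glues together $m$ disjoint copies of $G$ and attaches all degree-$1$ online vertices (across \emph{all} copies) to a \emph{single} shared dummy $\ell^\star$. Then $\mathrm{OPT}(G'_m)\geq m\cdot\mathrm{OPT}(G)$ with no overcounting, and since $\ell^\star$ can be matched only once, at most one of the $m$ copy-restrictions of $A$'s output contains a dummy edge. Picking one copy uniformly at random (and discarding the bad one, if chosen) loses only an additive $n'/m$, which vanishes as $m\to\infty$; this is the step your per-vertex-dummy construction cannot reproduce, because the number of wasted dummy edges is not $O(1)$ but can scale with the number of degree-$1$ vertices.
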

\begin{proof}
Since \(G(n,2)\) includes graphs with online degrees at most 2, an online algorithm might perform better on graphs with exactly degree 2. We show that these two cases are equivalent. We only need to show that if we have a $\gamma$-competitive ratio algorithm on graphs with an online degree of exactly $2$, then there is a $\gamma$-competitive ratio algorithm on graphs with an online degree of at most $2$.

Fix an arbitrary algorithm $A'$ with $\gamma$-competitive ratio on graphs with an online degree of exactly $2$. From this, we will construct an algorithm $A$ with $\gamma$-competitive ratio on $G(n,2)$. Fix any graph \(G = (L,R,E) \in G(n,2)\). We construct a new graph \(G_m'=(L',R',E')\), parameterized by an integer $m$, where all online vertices have exactly degree $2$. This graph \(G_m'\) consists of \(m\) copies of \(G\), where the \(i\)-th copy is denoted by $G_i=(L_i,R_i,E_i)$ and an additional vertex $\ell^\star$ to adjust the degree of every online vertex in \(G'_m\) to exactly two. More precisely, the set of offline vertices \(L'\) is \(L_1\cup L_2\cup\dots\cup L_m\cup \{\ell^\star\}\) while the set of online vertices \(R'\) is \(R_1\cup R_2\cup\dots\cup R_m\). The edge set $E'$ is \(E_1\cup E_2\cup\dots \cup E_m\cup \tilde{E}\) where the edges in $\tilde{E}$ are defined next. 

The arrival sequence of the online vertices in $G'_m$ is as follows: all the $m$ copies of the first vertex in $R$, followed by all the $m$ copies of the second vertex in $R$, and so on. Recall, each of the copies $G_i=(L_i,R_i,E_i)$ are isomorphic to the original graph $G=(L,R,E)$. The graph \(G_m'\) has an additional set \(\tilde{E}\) of the edges as follows: For every $v\in R_1\cup R_2\cup\dots\cup R_m$ with degree $1$, we add an edge $(v, \ell^\star)$ to the graph $G'_m$.

Now, we run $A'$ on $G'_m$ for the first $m$ online vertices, followed by the next $m$ online vertices, and so on. The matching produced by $A'$ can be viewed as $m$ different matchings on $G$, with an additional vertex $\ell^\star\in L$. However, at most one of those matchings may contain $\ell^\star$ as part of the matching. The remaining $m-1$ matchings must not contain $\ell^\star$ and can be directly viewed as a matching on $G$. The algorithm $A$ will pick one of these $m$ matchings uniformly at random. If $A$ happens to pick the matching with $\ell^\star$ we give up and do not out put any matching. Note that $A$ can be generated from $A'$ in an online fashion, by selecting one of the $m$ matchings and simulating it. If this matching contains $\ell^\star$, we stop at that point and otherwise we output a feasible matching.

Now, let us bound the competitive ratio of $A$, assuming that the competitive ratio of $A'$ is $\gamma$. Let us assume that $G$ has an optimal matching of size $n'$, implying that $G'_m$ will have an optimal matching of size at least $m\cdot n'$. Hence $A'$, with a competitive ratio of $\gamma$, should give us at least $\gamma\cdot m\cdot n'$ size, in expectation. However, in this matching produced, we might have $\ell^\star$, which we cannot use in algorithm $A$. By our construction, we have $m$ collections of matchings of which one contains $\ell^\star$ and the remaining $m-1$ of them are used by $A$. Now, if we leave aside the particular matching that contains $\ell^\star$, we would still have a matching of size at least $\frac{\gamma\cdot m\cdot n'-n'}{m}$ in expectation, as produced by $A'$. The optimal matching in this set is of size at most $n'$. Hence the expected competitive ratio of $A$ is at least $\frac{\gamma\cdot m\cdot n'-n'}{m\cdot n'}=\frac{\gamma\cdot m-1}{m}$, which converges to $\gamma$ as $m$ increases. Hence, by choosing a large $m$, this conversion ensures that $A$ attains the same competitive ratio as $A'$ on $G$.

\end{proof}

\subsection{Primal and Dual programs for bipartite matching}
At this point, we would like to explain the primal and dual programs for online bipartite matching. We shall use $x_{i,j}$ to denote whether $i$ is matched to $j$ or not. So, we would want $x_{i,j}$ to be either $0$ or $1$, and the size of the matching produced would be the sum of $x_{i,j}$, which we would like to maximize. To ensure that it is a matching, we add the constraints that for each vertex, there must be at most one edge incident on the vertex.

We shall also use $x_i:=\sum_{j\in R}x_{i,j}$ to denote whether vertex $i$ is matched or not. We shall also relax this problem as follows :

\begin{minipage}{0.45\textwidth}
\vspace{10pt}
Maximize $\sum\limits_{i\in L}\sum\limits_{j\in R}x_{i,j}=\sum_{i\in L}x_i$

subject to :
$$\forall i\in L \sum\limits_{j\in \text{Nbd}(i)}x_{i,j}\leq 1$$
$$\forall j\in R \sum\limits_{i\in \text{Nbd}(j)}x_{i,j}\leq 1$$
$$x_{i,j}\geq 0$$
\end{minipage}\vline\hspace{10pt}\begin{minipage}{0.45\textwidth}
Minimize $\sum\limits_{i\in L}\alpha_i+\sum\limits_{j\in R}\beta_j$

subject to :
$$\forall (i,j)\in E, \alpha_i+\beta_j\geq 1$$
$$\forall i\in L, j\in R, \alpha_i\geq 0\text{ and }\beta_j\geq 0$$\vspace{1.1cm}
\end{minipage}

We have relaxed the $x_{i,j}$'s from being $\{0,1\}$ valued to instead allow any fraction :  $x_{i,j}\in[0,1]$. This means that the optimal value of the primal objective can only be larger now (since the graph is bipartite, the optimal value actually is the same), which means that the matching size obtained (the competitive ratio) will only be lower, compared to the primal optimal. Also note that if we are doing a fractional or a randomized matching, we would want $x_{i,j}$ to represent the fraction that $i$ is matched to $j$ or the probability of $i$ being matched to $j$, ensuring that the expected size of matching is $\sum_{i\in L,j\in R} x_{i,j}$ and the change/increment in the primal value because of the vertex $j$ is $\sum_{i\in L} x_{i,j} = \sum_{i\in L}\Delta x_{i}$.

The dual program represents the vertex cover problem, but we shall not use this fact in any way. Note the dual constraint $\alpha_i+\beta_j\geq 1$. If a dual solution is not feasible, meaning that the solution does not satisfy $\alpha_i+\beta_j\geq 1$ for all $(i,j)\in E$ but instead satisfies $\alpha_i+\beta_j\geq \gamma$ for some $\gamma<1$, then we shall call this solution $\gamma$-feasible.

In the analysis of our online algorithm, we will keep updating the primal and dual solutions. We shall use $\Delta P$ and $\Delta D$ to denote the increase in the primal and dual objectives, respectively. Similarly, we shall use $\Delta x_i$ and $\Delta \alpha_i$ to refer to the increments in the corresponding variables.

The main idea that we use in our algorithms is an online primal-dual method, similar to the one used in prior work~\cite{DevanurJK2013,FahrbachHTZ2022}. We always maintain feasible primal and approximately feasible dual solutions. Whenever primal increases by some amount, we increase the dual by the same amount, so the primal and dual always have the same value. The primal will be feasible, so we would obtain an optimal solution if the dual were also feasible. However, we will be unable to maintain a feasible dual solution; instead, we will maintain an approximate feasible solution that satisfies the dual constraints with a slack of $\gamma$, in other words, be $\gamma$-feasible.

More specifically, we will always maintain the following throughout :

\begin{itemize}
    \item \textbf{Approximate Dual Feasibility :} For any $i\in L$ and $j\in R$, $\alpha_i+\beta_j\geq \gamma$
    \item \textbf{Reverse Weak Duality :} The objective value of the primal (P) and the objective value of the dual (D) satisfy that $P\geq D$.
\end{itemize}

This is important because of the following claim.
\begin{claim}
If the primal solution is feasible and has value $P$, and if the dual solution has objective value at most $P$, while being $\gamma$-feasible, then the primal value $P$ is at least $\gamma$ fraction of the primal optimal.
\end{claim}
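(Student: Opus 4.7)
The plan is to reduce the claim to standard LP weak duality by rescaling the dual. Given a $\gamma$-feasible dual solution $(\alpha, \beta)$ with $\alpha_i + \beta_j \geq \gamma$ for every edge $(i,j) \in E$ and $\alpha_i, \beta_j \geq 0$, I would consider the scaled solution $(\alpha'_i, \beta'_j) := (\alpha_i/\gamma, \beta_j/\gamma)$. Since $\gamma \leq 1$ and all original variables are nonnegative, the scaled variables are also nonnegative, and dividing the inequality $\alpha_i + \beta_j \geq \gamma$ through by $\gamma$ gives $\alpha'_i + \beta'_j \geq 1$ for every edge. Hence $(\alpha', \beta')$ is a genuinely feasible dual solution to the LP written above.

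Next, by weak LP duality applied to the bipartite matching primal and its dual, the objective value of any feasible dual solution upper bounds the primal optimum. Denoting the primal optimum by $\mathrm{OPT}$ and the dual objective value of $(\alpha, \beta)$ by $D = \sum_{i \in L} \alpha_i + \sum_{j \in R} \beta_j$, the scaled dual has objective value $D/\gamma$, so
\[
\mathrm{OPT} \;\leq\; \sum_{i \in L} \alpha'_i + \sum_{j \in R} \beta'_j \;=\; \frac{D}{\gamma}.
\]
Combining this with the hypothesis $D \leq P$ gives $\mathrm{OPT} \leq P/\gamma$, i.e.\ $P \geq \gamma \cdot \mathrm{OPT}$, which is exactly the statement of the claim.

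There is essentially no obstacle here beyond a one-line rescaling; the claim is a direct corollary of weak duality and serves to formalise why maintaining a $\gamma$-approximately-feasible dual together with the reverse weak duality invariant $P \geq D$ is sufficient to conclude that the algorithm's primal value $P$ is within a factor $\gamma$ of optimal. I would present the proof in exactly this order: (i) rescale the dual by $1/\gamma$ and verify feasibility, (ii) invoke weak duality to upper bound $\mathrm{OPT}$ by $D/\gamma$, and (iii) chain with $D \leq P$ to finish.
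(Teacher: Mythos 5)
Your proof is correct and follows essentially the same argument as the paper: rescale the $\gamma$-feasible dual by $1/\gamma$ to obtain a genuinely feasible dual, bound the primal optimum by its objective $D/\gamma$, and chain with $D \leq P$. The paper phrases the bound $\mathrm{OPT} \leq D/\gamma$ via strong duality, but as you note only weak duality is actually needed, so the two proofs are the same in substance.
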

\begin{proof}
Let $M$ be the primal optimal. By strong duality, it is also the value of the dual optimal. Since the primal is feasible, $P\leq M$. Let the dual solution have value $D\leq P$. Since the dual solution is $\gamma$ feasible, multiplying all the dual variables by $\nicefrac{1}{\gamma}$ gives a feasible dual solution with value $\nicefrac{D}{\gamma}\geq M$ because it is feasible.

Hence $D\leq P\leq M\leq \nicefrac{D}{\gamma}\implies \gamma\cdot M\leq D\leq P\implies\nicefrac{P}{M}\geq \gamma$.
\end{proof}

\section{Techniques}

While the algorithm that we suggest is extremely simple, the interesting part is the analysis. We use the online primal-dual method to prove that the algorithm is optimal. The primal variables are easy to update, given the algorithm. The tricky part is to update the dual variables: $\alpha_i$ and $\beta_j$. Indeed, there are other intuitive candidate algorithms for this problem, which we tried but were unable to analyse using the primal-dual approach.

On arrival of any vertex, $j\in R$, we try to increase the $\alpha_i$'s as much as possible and $\beta_j$ as little as possible while maintaining $\gamma$-competitive ratio. The key observation, in  \Cref{lemma:setting_alpha}, is that given the primal variable of a vertex $i\in L$, one can determine a lower bound for the dual variable $\alpha_i$. Given this observation, it is sufficient to show that as the probability of $i$ being matched approaches $1$, the dual variable $\alpha_i$ approaches $\gamma$.

\section{Randomized Integral Matching}
We start with a lower-bound construction first. 
\subsection{Lower Bound}\label{lb}
Recall that $\eta := 1-\sum_{i=1}^\infty \frac{1}{2^{2^i+i-1}}$. The following theorem shows that no online algorithm can achieve a competitive ratio better than $\eta$.
\begin{theorem}
For Online Integral Bipartite Matching, no online algorithm (deterministic or randomized) can attain a competitive ratio better than $\eta$.
\end{theorem}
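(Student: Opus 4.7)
By Yao's minimax principle, proving the randomized lower bound reduces to exhibiting a probability distribution $\mathcal{D}$ over instances in $G(n,2)$ (together with an arrival order) such that every deterministic online algorithm $A$ satisfies $\mathbb{E}_{I\sim\mathcal{D}}[|A(I)|] \le \eta \cdot \mathbb{E}_{I\sim\mathcal{D}}[\mathrm{OPT}(I)]$. Since $\eta$ is itself given as an infinite series, my plan is to construct a family $\{\mathcal{D}_k\}_{k\ge 1}$ indexed by a depth parameter $k$ such that the best expected ratio of any deterministic algorithm on $\mathcal{D}_k$ is at most $\eta_k := 1 - \sum_{i=1}^{k} 2^{-(2^i+i-1)}$; letting $k\to\infty$ then yields the claimed bound $\eta$.

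The construction of $\mathcal{D}_k$ is dictated term-by-term by the series. Writing $2^{-(2^i+i-1)} = 2^{i-1}\cdot 2^{-2^i}$, the natural target is a $k$-phase process where phase $i$ introduces $2^{i-1}$ online vertices, of which a $2^{-2^i}$ fraction (in expectation over $\mathcal{D}_k$) fail to match because both of their offline neighbors are already committed to earlier edges. The doubly exponential $2^{-2^i}$ is engineered to be the probability of a ``bad'' configuration under a uniformly random pairing drawn at phase $i$: concretely, phase $i$ involves $2^i$ currently active offline vertices paired into $2^{i-1}$ online-edges by a uniformly random perfect matching drawn from a space of size $\Theta(2^{2^i})$. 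The offline bookkeeping is arranged so that $\mathrm{OPT}$ always matches every online vertex, giving $\mathbb{E}[\mathrm{OPT}(\mathcal{D}_k)] = \sum_{i=1}^{k} 2^{i-1} = 2^k - 1$.

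Given such an instance, the analysis would proceed in three steps. First, by the invariance of $\mathcal{D}_k$ under relabeling the offline vertices within each phase, any deterministic algorithm can be symmetrized: averaging its outcomes over valid relabelings produces a symmetric algorithm (matching each of its two free neighbors with equal probability, a \textsc{Half-Half}-type behaviour) with identical expected performance. Second, by induction on $i$, track a single scalar $p_i$ equal to the probability that a generic offline vertex in the phase-$i$ active pool is already matched; symmetry and the independence structure of the random pairings let the phase-$i$ contribution to the loss factor cleanly into a per-online-vertex term equal to exactly $2^{-2^i}$. Third, summing the per-phase losses yields $\mathbb{E}[|A|]/\mathbb{E}[\mathrm{OPT}] \le \eta_k$, and taking $k\to\infty$ completes the proof.

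The principal technical obstacle is the design of the instance itself: one must simultaneously achieve enough symmetry to legitimize the symmetrization of step one, enough adversarial randomness at each phase to extract exactly the $2^{-(2^i+i-1)}$ loss term, and an $\mathrm{OPT}$ that matches the maximum possible size. A secondary challenge in step two is that the matched-statuses of distinct offline vertices become correlated across phases through the adversary's earlier pairings, so verifying the per-phase loss is exactly $2^{-2^i}$ per phase-$i$ online vertex requires a careful inductive argument that either leverages additional independence built into the construction or tracks the full joint distribution of the pool state.
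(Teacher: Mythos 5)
Your high-level plan is the right one (Yao's principle plus a phased hard instance), but the arithmetic that is supposed to dictate the construction is wrong in a way that breaks the whole scheme, and you never actually produce the instance.

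The factoring $2^{-(2^i+i-1)} = 2^{i-1}\cdot 2^{-2^i}$ is false: the correct identity is $2^{-(2^i+i-1)} = 2^{1-i}\cdot 2^{-2^i}$, so the factor multiplying $2^{-2^i}$ should \emph{shrink} with $i$, not grow. Because of this, your phase sizes ($2^{i-1}$ online vertices in phase $i$, growing) and the resulting $\mathbb{E}[\mathrm{OPT}] = 2^k-1$ do not produce the series for $\eta$. Concretely, under your accounting the loss fraction is $\bigl(\sum_{i=1}^{k} 2^{\,i-1-2^i}\bigr)/(2^k-1)$, whose numerator converges while the denominator grows like $2^k$, so your bound tends to $1$ (not $\eta$) as $k\to\infty$. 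The paper's construction has the \emph{opposite} shape: with $n=2^k$ offline and online vertices, phase $i$ contains $n/2^i = 2^{k-i}$ online vertices (decreasing in $i$), each phase-$i$ online vertex is adjacent to offline vertex $i$ (its ``own'') and to one fresh higher-indexed offline vertex, and an offline vertex that last appears in phase $i$ is unmatched with probability exactly $2^{-(2^i-1)}$, not $2^{-2^i}$. Then $2^{k-i}\cdot 2^{-(2^i-1)} = n\cdot 2^{-(2^i+i-1)}$, and dividing by $\mathrm{OPT}=n$ gives the $\eta$ series.

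Beyond the numerology, you explicitly defer the construction itself as ``the principal technical obstacle,'' but that is the entire content of this proof. The hard distribution in the paper is simple: build one fixed graph $G$ as above and take the uniform distribution over all relabelings of its offline vertices. Symmetrization against a deterministic algorithm then yields, for an offline vertex that last appears in phase $i$, the recursion ``unmatched after phase $i$'' $=$ ``unmatched after phase $i-1$'' $\times$ ``competitor unmatched after phase $i-1$'' $\times \tfrac12$, which telescopes to $2^{-(2^i-1)}$. Your step two, which hopes to track a single scalar $p_i$ cleanly, will not work without an argument along these lines for why the symmetrized adversary sees essentially independent competitors; in the paper this falls out of the explicit graph structure (each online vertex's second neighbor is a fresh higher-indexed offline vertex), which is exactly the piece you left unspecified.
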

\begin{proof} We will use Yao's principle to prove this. We shall first construct one particular graph $G$ and then look at the uniform distribution over all graphs that are isomorphic to the described graph.

\begin{figure}[ht]
    \centering
    \includegraphics[scale=0.3]{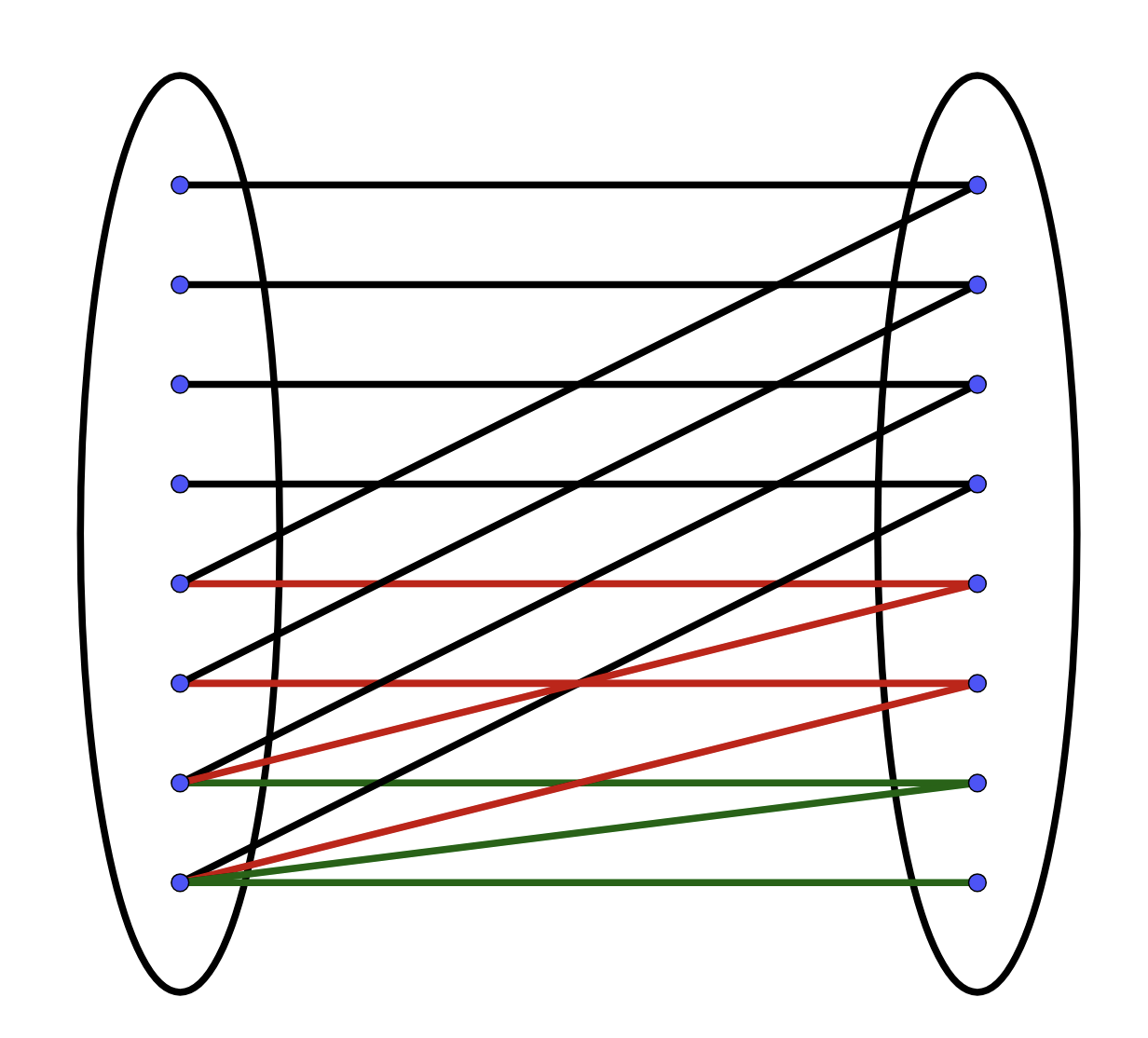}
    \caption{The graph for the lower bound, with $8$ online and offline vertices with online degree bounded by $2$. The black edges correspond to online vertices in the first phase, the red ones correspond to the second phase and the green ones correspond to the third phase.}
    \label{fig:example}
\end{figure}

This graph will have $n=2^k$ online and $2^k$ offline vertices for some integer $k$. We will describe the graph in phases. There will be $k=\log n$ phases in total. The online $i$-th vertex will always have an edge to the offline $i$-th vertex. Since the online vertices will have degree $2$, it suffices to explain the other neighbor of the online vertex $i$. \Cref{fig:example} depicts this graph for $k=3$, with $2^k=8$ online and offline vertices.

This is where we will use the phases to describe the neighbors. In the first phase, there will be $\nicefrac{n}{2}$ online vertices: $1$ through $\nicefrac{n}{2}$ and the $i$-th vertex would have the other edge to offline vertex $\nicefrac{n}{2}+i$. In the second phase, there will be $\nicefrac{n}{4}$ online vertices: $\nicefrac{n}{2}+1$ through $n-\nicefrac{n}{4}$ and the $\nicefrac{n}{2}+i$-th vertex would have the other edge to offline vertex $n-\nicefrac{n}{4}+i$. In the $j$-th phase, there will be $\nicefrac{n}{2^j}$ online vertices: $n-\nicefrac{n}{2^{j-1}}+1$ through $n-\nicefrac{n}{2^j}$ and the $n-\nicefrac{n}{2^{j-1}}+i$-th vertex would have the other edge to offline vertex $n-\nicefrac{n}{2^j}+i$.

Notice that in the $j$-th phase, when an online vertex arrives, its neighbors' degree changes from $j-1$ to $j$.

Now using this graph, $G$, let us construct a distribution over graphs. We shall consider a set of all graphs that can be obtained from $G$, by a permutation of the offline vertices, and then take a uniform distribution over these graphs. The idea is to use Yao's principle on this distribution and study the competitive ratio of deterministic algorithms. We shall compute the probability of offline vertices being matched, based on the phases. We shall consider any candidate deterministic algorithm for this.

Consider any offline vertex, $u$. If there is an online vertex, say $v$, in the $j$-th phase, with $u$ as a neighbor of $j$, then we shall say that $u$ ``appeared'' in the $j$-th phase. Notice that if $u$ appeared in the $j$-phase, then $u$ must have appeared in the phases $1,2,\dots,j-1$ too. For every offline vertex, $u$, there will be a unique $j$ such that $u$ appeared in the $j$-th phase and then did not appear in any of the phases $j+1,j+2,\dots$ again.

Let us first look at offline vertices that appear in the first phase and then never again. The probability that such a vertex is matched is $\nicefrac{1}{2}$, and there are $\nicefrac{n}{2}$ such vertices.

Next, for vertices that appear in the second phase and never after that, we shall compute the probability that they are not matched. Let us consider one such vertex, $u$. For $u$ not to be matched, it has to be necessarily unmatched when it appeared in the first phase. This would happen with probability $\nicefrac{1}{2}$. Also, when $u$ appears in the second phase, the online vertex, say $v$, will have another neighbor, say $i_1$. If $i_1$ happened to be matched earlier, in phase 1, then $u$ will get matched in the second phase. Therefore, for $u$ to be unmatched $i_1$ must be unmatched and then on the arrival of $v$ the probability that the deterministic algorithm will leave $u$ unmatched will be $\nicefrac{1}{2}$. Hence the probability that $u$ is unmatched is $\nicefrac{1}{8}$.

For a vertex $u$ that appears in the third phase and never thereafter, the probability that $u$ is unmatched is $\nicefrac{1}{2}\cdot\nicefrac{1}{8}\cdot\nicefrac{1}{8}$ because this is the case when $u$ was unmatched even after phase two and $i_1$, the competitor of $u$ in phase three was also unmatched after phase two and in this case the probability of $u$ being unmatched is $\nicefrac{1}{2}$.

Similarly, for a vertex that appears in the $i$-th phase and then never thereafter, the probability that such a vertex is unmatched is $\nicefrac{1}{2^{2^i-1}}$. Also, there are $\frac{n}{2^i}$ such vertices that appear in the $i$-th phase and then never thereafter.

Hence, the expected number of vertices unmatched equals $$\sum_{i=1}^\infty \frac{n}{2^{2^i+i-1}}.$$
Therefore, the competitive ratio will be $$1-\sum_{i=1}^\infty \frac{1}{2^{2^i+i-1}}=\eta\approx 0.717772.$$
\end{proof}
\subsection{Upper Bound}\label{ub}
In this section, we will show that the folklore \textsc{half-half} algorithm attains a competitive ratio of \(\eta\). We begin by recalling the \textsc{half-half} algorithm, which can be described as follows. When an online vertex $v$ comes, if it has only one unmatched neighbor, then match $v$ with the unmatched neighbor. If both the neighbors of $v$ are unmatched, then pick one uniformly at random and match it with $v$. The formal description of the algorithm is given in \Cref{alg}.

\begin{algorithm}[H]
  \caption{Half-Half Algorithm}\label{alg}
  \begin{algorithmic}[1]
    \Procedure{$A$}{$j,S$}\Comment{Online vertex $j$ with neighborhood $S\subseteq L$, $|S|\leq 2$}
      \If {$\forall i\in S$, $i$ is already matched}
      \State Do nothing.
      \Else
      \If {There exists only one $i\in S$, such that $i$ is unmatched}
      \State Match $i$ to $j$.
      \Else
      \State We have $S=\{i_1,i_2\}$ such that $i_1$ and $i_2$ are both unmatched
      \State Match $j$ with $i_1$ and $i_2$ with probability half each.
      \EndIf
      \EndIf
    \EndProcedure
  \end{algorithmic}
\end{algorithm}


\subsubsection{Analysis of the half-half algorithm}

\begin{theorem}
\label{thm:integral_eta}
The half-half algorithm attains a competitive ratio of $\eta$, where :
\[
\eta := 1 - \sum_{i=1}^{\infty} \frac{1}{2^{{2^i} + i - 1}}
= 1 - \frac{1}{4} - \frac{1}{32} - \frac{1}{2^{10}} - \frac{1}{2^{19}} - \dots \approx 0.717772.
\]
\end{theorem}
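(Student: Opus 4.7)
My plan is to apply the online primal-dual framework from \Cref{sec:prelim}. Define the primal variables $x_{i,j}$ as the probability that the half-half algorithm matches offline vertex $i$ to online vertex $j$, so that $x_i := \sum_{j \in R} x_{i,j}$ is the probability that $i$ is eventually matched and the primal objective $P = \sum_{i \in L} x_i$ is exactly the expected matching size produced by the algorithm. I will exhibit a dual solution whose value equals $P$ and is $\eta$-approximately feasible; by the claim at the end of \Cref{sec:prelim}, this immediately yields $P \geq \eta \cdot \mathrm{OPT}$.

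For the dual, I set $\alpha_i = f(x_i)$ for a carefully chosen nondecreasing $1$-Lipschitz function $f : [0,1] \to [0,\eta]$ with $f(0)=0$ and $f(1)=\eta$. When an online vertex $j$ arrives with neighbors $i_1, i_2$, I raise $\alpha_{i_\ell}$ from $f(x_{i_\ell})$ to $f(x_{i_\ell} + \Delta x_{i_\ell})$, and deposit the residual
\[
\beta_j \;:=\; \sum_{\ell=1,2}\Bigl(\Delta x_{i_\ell} - \bigl(f(x_{i_\ell} + \Delta x_{i_\ell}) - f(x_{i_\ell})\bigr)\Bigr)
\]
into $j$'s dual variable, so that $\Delta D = \Delta P$ holds by construction and reverse weak duality is preserved throughout the execution. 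The approximate-feasibility condition $\alpha_{i_\ell} + \beta_j \geq \eta$ then collapses to a functional inequality on $f$ whose worst case must be checked over all pre-arrival marginals $(x_{i_1}, x_{i_2})$ and all increments $(\Delta x_{i_1}, \Delta x_{i_2})$ that the half-half rule can produce under any joint distribution of the match statuses of $i_1, i_2$ consistent with these marginals.

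The heart of the argument is to pick $f$ so that this functional inequality holds tightly in the worst case while still attaining $f(1) = \eta$. Guided by the lower-bound graph of \Cref{lb} --- in which the residual unmatched mass after phase $k$ equals $2^{-(2^k-1)}$ and phase $k$ contributes $2^{-(2^k+k-1)}$ to the gap $1 - \eta$ --- I would look for $f$ that is piecewise with breakpoints at $x = 1 - 2^{-(2^k-1)}$, with slopes tuned so that the per-phase contribution to $1 - f$ telescopes exactly to the series defining $1 - \eta$.

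The main obstacle I foresee is that $\Delta x_{i_\ell}$ is not a function of the marginals $(x_{i_1}, x_{i_2})$ alone: under half-half the update splits on whether each neighbor is currently matched, which depends on the joint distribution of the two statuses and hence on the graph's entire history. The crux of the proof will therefore be the ``setting-$\alpha$'' lemma highlighted in the techniques overview, which I expect to isolate a lower bound on $\alpha_i$ purely in terms of $x_i$ that is robust over these unknown joint laws. Once that key lemma is established the dual feasibility check reduces to a clean one-variable inequality on $f$ at the marginal level, and verifying $f(1) = \eta$ finishes the proof.
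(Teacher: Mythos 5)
Your high-level architecture is essentially the paper's: online primal-dual, dual offline variables $\alpha_i$ determined by the marginal $x_i$, residual pushed into $\beta_j$, and you correctly anticipate that the heart of the matter is a lemma lower-bounding $\alpha_i$ in terms of $x_i$ (this is the paper's \Cref{lemma:setting_alpha}). You also correctly identify the central obstacle: the increments $\Delta x_{i_\ell}$ under the true law of the half-half process depend on the joint distribution of the neighbours' match statuses, not just on the marginals. Where your plan and the paper diverge is in how this obstacle is actually overcome, and there is a concrete gap.

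The paper does \emph{not} work with the true marginals. It instead maintains a deterministic, history-independent primal proxy in which every $x_i$ takes a value of the form $1-2^{-p}$, and the update rule on arrival of $j$ with neighbours at levels $p_1, p_2$ sends both to $1-2^{-(p_1+p_2+1)}$ unconditionally, i.e.\ exactly as if the two ``unmatched'' events were independent. This makes the update a function of the pair $(p_1,p_2)$ alone, so the entire joint-law ambiguity disappears at the bookkeeping level. What makes this legal is a separate negative-correlation argument (\Cref{pr_values}): conditioning on one neighbour being unmatched can only decrease the probability the other is unmatched, hence the true expected gain in matching size dominates the deterministic primal increment $\Delta P$. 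Your proposal never decouples the primal variable from the true probability; without that decoupling, your functional inequality on $f$ would have to be verified over an uncountable family of joint laws, and you have no handle on which joints are realisable. The negative-correlation claim is precisely the missing handle, and it is what lets the worst case be pinned to the independent coupling.

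A second, smaller issue: your ansatz that $f$ is nondecreasing and $1$-Lipschitz is not what the paper uses and probably cannot be made to work. Plugging in the paper's levels $x=1-2^{-k}$, the realised lower bounds are $\alpha_{(1)}=1-\eta$, $\alpha_{(2)}=2-2\eta$, $\alpha_{(3)}=\nicefrac{11}{4}-3\eta$, then $\alpha_{(4)}=\alpha_{(5)}=\alpha_{(6)}=\eta$, then $\alpha_{(7)}<\eta$ again. The jump from level $1$ to level $2$ has slope $(1-\eta)/(\nicefrac{1}{4})>1$, so the map is not $1$-Lipschitz, and the dip at $k=2^m-1$ shows it is not even monotone. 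The paper does not need $\Delta\alpha_{i_\ell}\le\Delta x_{i_\ell}$ pointwise; it only needs the joint bound $\Delta\alpha_{i_1}+\Delta\alpha_{i_2}\le\Delta P$, which it verifies by a case analysis (on whether the level $T=p_1+p_2+1$ is of the form $2^m-1$, $2^{m_1}+2^{m_2}-1$, or neither) together with an inductive computation of the $\alpha_{(k)}$ sequence. Your guess at breakpoints $1-2^{-(2^k-1)}$ is right in spirit --- those are exactly the critical levels --- but the slopes cannot be ``tuned'' within the $1$-Lipschitz monotone class. You would need to drop both structural assumptions and carry out the recursive verification that the paper does via \Cref{claim_case1} and \Cref{claim_case2}.
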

Throughout the rest of the section, we are going to prove this theorem.

The plan is again to use the online Primal-Dual analysis. We will show that on the arrival of $j\in R$, with neighbors $i_1,i_2\in L$, we can update the primal such that the expected size of the matching produced is at least the value of the primal. Simultaneously we will set up $\beta_j$ and increment $\alpha_{i_1},\alpha_{i_2}$ so that the dual conditions are $\eta$-feasible. We shall initialize by setting all the $\alpha_i$'s to $0$, and $\beta_j$ will start off as $0$ before we do the increments for the dual variables. We shall describe how to update the dual variables later. First, we will describe the updates to the primal variables.

\noindent{\bf Updating the Primal variables.} According to the algorithm, we can change the primal variables according to the following rule. We always maintain the primal variables to take values of the form $1-\frac{1}{2^p}$ where $p \in \{0,1,2,3,\ldots\}$. If an online vertex $j$ comes with neighbors $\{i_1, i_2\}$, and the old values of $x_{i_1}$ and $x_{i_2}$ are as follows:

\[
x_{i_1} = 1 - \frac{1}{2^{p_1}}, \ \ \ \ \ x_{i_2} = 1 - \frac{1}{2^{p_2}},
\]
then the updated values would be
\[
x_{i_1} = x_{i_2} = 1 - \frac{1}{2^{p_1+p_2+1}}.
\]
With this, the change in the primal value would be

\[
\Delta P = 2\left(1-\frac{1}{2^{p_1+p_2+1}}\right) - \left(1-\frac{1}{2^{p_1}}\right) - \left(1-\frac{1}{2^{p_2}}\right).
\]

\begin{claim}\label{pr_values}
Changing the primal in this way ensures that primal variables are always of the form $1-\frac{1}{2^p}$, and that the expected increment of matching size is at least the change in primal.
\end{claim}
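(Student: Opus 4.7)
The first assertion follows by a quick induction on arrivals. Initially every $x_i = 0 = 1 - 1/2^0$, and when $j$ arrives with neighbors $\{i_1, i_2\}$ having primals $1 - 1/2^{p_1}$ and $1 - 1/2^{p_2}$, the updated value $1 - 1/2^{p_1+p_2+1}$ is again of the prescribed form; primal variables outside $\{i_1, i_2\}$ do not change.

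For the second assertion, my plan is to maintain, by induction on the online arrivals, the following \emph{joint} invariant relating the algorithm's random state to the $x_i$'s: for every subset $S \subseteq L$ and at every stage of the algorithm,
\[
\Pr\!\left[\text{every } i \in S \text{ is unmatched}\right] \;\leq\; \prod_{i \in S}(1 - x_i) \;=\; \prod_{i \in S} \frac{1}{2^{p_i}}.
\]
Specializing to $S = \{i\}$ yields $x_i \leq \Pr[i \text{ is matched}]$; summing over $i \in L$ then gives $\sum_i x_i \leq \mathbb{E}[\text{matching size}]$, which is the comparison we want.

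The inductive step, upon arrival of $j$ with neighborhood $\{i_1, i_2\}$, splits into three cases according to $S \cap \{i_1, i_2\}$. If this intersection is empty, nothing relevant changes. If exactly one of $i_1, i_2$ lies in $S$, say $i_1$, then $i_1$ remains unmatched after $j$'s arrival if and only if both $i_1$ and $i_2$ were unmatched before \emph{and} the half-half coin landed in $i_2$'s favor, introducing a factor of $1/2$; the update rule is calibrated so that $1 - x_{i_1}^{\mathrm{new}} = (1-x_{i_1})(1-x_{i_2})/2$, which exactly matches the contraction of the joint unmatched probability for the enlarged set $S \cup \{i_2\}$, to which we apply the inductive hypothesis. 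If both $i_1, i_2 \in S$, the algorithm matches $j$ whenever at least one of $i_1, i_2$ is unmatched, so at least one is matched after $j$'s arrival and the joint unmatched probability drops to $0$, trivially satisfying the bound.

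I expect the main subtlety to be recognizing that the naive marginal invariant $\Pr[i \text{ unmatched}] \leq 1 - x_i$ is \emph{not} self-propagating: the middle case of the induction requires the invariant for the strictly larger set $S \cup \{i_2\}$, which forces the full joint form. Once this stronger invariant is identified, the algebra essentially writes itself, because the $+1$ in the exponent $p_1 + p_2 + 1$ is engineered precisely to absorb the factor of $1/2$ coming from the half-half coin flip.
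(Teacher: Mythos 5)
Your proof is correct and takes a genuinely different route from the paper's. The paper argues that the events ``$i_1$ unmatched'' and ``$i_2$ unmatched'' are negatively correlated, establishing this by tracing the cascade of forced matches that occur when one conditions on $i_1$ being unmatched, and then bounds the post-update unmatched probability. You instead maintain the joint invariant $\Pr[\text{all } i\in S \text{ unmatched}]\le\prod_{i\in S}(1-x_i)$ over arbitrary subsets $S$, correctly observe that the marginal version alone is not self-propagating, and show the joint version does propagate because the coin at $j$ is fresh and independent of the prior state. This buys two things. First, it replaces the paper's somewhat informal negative-correlation argument with a self-contained induction. Second -- and this is a genuine improvement -- you conclude the cumulative comparison $\sum_i x_i\le\mathbb{E}[\text{matching size}]$, which is the statement the primal--dual accounting actually needs. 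The stepwise version that the claim literally asserts (``expected increment of matching size is at least $\Delta P$''), and which the paper's final inequality purports to show, is actually false in general: if $j_1,j_2,j_3$ each have $\{i_1,i_2\}$ as their neighbors, then after $j_2$ both offline vertices are matched with probability $1$ while $x_{i_1}=x_{i_2}=7/8$, so $j_3$ causes $\Delta P=15/64>0$ but zero expected gain in matching size. Your cumulative formulation is the one that is both true and sufficient, and you were right to target it. The only point worth spelling out explicitly is the independence used in the middle case: the coin drawn upon $j$'s arrival is independent of the event ``all of $S\cup\{i_2\}$ unmatched before $j$,'' which holds precisely because that event is measurable with respect to the randomness used before $j$ arrives.
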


\begin{proof}
First of all, it is clear from these updates that the primal variables are always of the form $1-\frac{1}{2^p}$. Had the events of $i_1$ and $i_2$ being unmatched been independent events, we could have said 
\[\mathbb{P}[i_1\text{ is unmatched after }j]=\mathbb{P}[i_2\text{ is unmatched after }j]\]\[=\frac{1}{2}\cdot \mathbb{P}[i_1\text{ and }i_2\text{ were unmatched before }j]=\frac{1}{2^{p_1+p_2+1}}.\]

However, if the events are not independent, then there is only a negative correlation, i.e.,  given that $i_1$ is unmatched, the probability of $i_2$ being unmatched can only decrease. The proof is as follows. Conditioned on $i_1$ being unmatched, every online vertex, with $i_1$ as its neighbor, must have been matched to the offline vertex other than $i_1$. Let us call this set of vertices $S$. Now, when these offline vertices from $S$, after being matched, were neighbors of some online vertex, say $v$, the online vertex $v$ must have been matched to the other neighbor of $v$. We can update our $S$ to include all such vertices that get matched in this manner. Continuing like this, the set $S$ will consist of all vertices that are matched, conditioned on $i_1$ not being matched. If any online vertex had $i_2$ as its neighbor and a vertex from $S$ as its other neighbor, $i_2$ must have been matched conditioned on $i_1$ not being matched. Otherwise, $i_1$ being matched and $i_2$ being matched are independent events. Hence,

\[\mathbb{P}[i_1\text{ is unmatched after }j]=\mathbb{P}[i_2\text{ is unmatched after }j]\]\[=\frac{1}{2}\mathbb{P}[i_1\text{ and }i_2\text{ were unmatched before }j]\leq\frac{1}{2^{p_1+p_2+1}}.\]

We have already seen that the change in primal is
\[
\Delta P = 2\left(1-\frac{1}{2^{p_1+p_2+1}}\right) - \left(1-\frac{1}{2^{p_1}}\right) - \left(1-\frac{1}{2^{p_2}}\right)=\frac{1}{2^{p_1}}+\frac{1}{2^{p_2}}-\frac{2}{2^{p_1+p_2+1}}.
\]

Now we can compare this to the expected increase in matching size,
\begin{align*}
\mathbb{E}[&\text{Increment in size of matching}]\\
&= \mathbb{P}[\text{Either }i_1 \text{ or }i_2\text{ was unmatched on arrival of } j]\\&=\mathbb{P}[i_1\text{ was unmatched on arrival of } j]+\mathbb{P}[i_2\text{ was unmatched on arrival of } j]-\\&\qquad\qquad\qquad\qquad\qquad\qquad \mathbb{P}[\text{Both }i_1 \text{ and }i_2\text{ were unmatched on arrival of } j]\\&\geq \frac{1}{2^{p_1}}+\frac{1}{2^{p_2}}-\frac{2}{2^{p_1+p_2+1}}.
\end{align*}
Thus,
$$ \mathbb{E}[\text{Increment in size of matching}] \geq \Delta P ,$$
and this proves the claim.
\end{proof}

\noindent{\bf Updating the Dual variables.} Now that we have described how the primal variables are updated, we shall describe how to update the dual variables. As mentioned earlier, we will always maintain $\Delta P\geq \Delta D$. So, when online vertex $j$ with neighbors $i_1,i_2$ arrives, we shall first compute $\Delta P$ and then try to set up $\alpha_{i_1}=\alpha_{i_2}$. The goal is to set up the $\alpha$'s as high as possible, preferably to $\eta$, and assign the remainder of $\Delta P$ to $\beta_j$ if needed. This will ensure $\Delta P\geq\Delta D$. We would want 
$\alpha_{i_1}+\beta_j\geq \eta$ and also $\alpha_{i_2}+\beta_j\geq \eta$, while simultaneously maintaining $\beta_j\geq 0$. This would limit us from setting large $\alpha$'s. It will be convenient to define the following notation.

\begin{definition} [$\alpha_{(k)}$]
    We define the quantity $\alpha_{(k)}$ as follows: On arrival of the online vertex $j$ with neighbors $i_1$ and $i_2$, suppose the primal variables $x_{i_1}$ and $x_{i_2}$ take the value $1-\frac{1}{2^k}$ {\em after} the update, then the minimum value that the dual variables $\alpha_{i_1}$ and $\alpha_{i_2}$ are set (while maintaining $\eta$-feasibility and $\beta_j\in [0,1]$) is the quantity $\alpha_{(k)}$.
\end{definition}

What we will show now is that it will always be possible for us to set the dual variables in order to maintain $\eta$-approximate dual feasibility.

We shall first show that for $1\leq i\leq 3$, we can set up $\alpha_{(i)}$ to maintain approximate dual feasibility. After that, we will prove a general statement showing that we can assign appropriate dual variables for all $x_i$.

\vspace{10pt}
\noindent{\bf Setting $\alpha_{(1)}$}. Let us first consider the case when the two neighbors $i_1,i_2$ had not appeared before. In this case, the half-half algorithm will set $x_{i_1}=x_{i_2}=\nicefrac{1}{2}$. Observe that this is the only case when the $x_i$ will be set to $\nicefrac{1}{2}$, and hence the previous values of $x_{i_1},x_{i_2}$ is $0$. So, $\Delta P=1$. Let us set $\alpha_{i_1}=\alpha_{i_2}= \alpha_{(1)} = 1-\eta$ in this case, and $\beta_j$ to $2\eta-1$. This will ensure that $\Delta D= 1=\Delta P$ (recall, $\alpha_{i_1}$ and $\alpha_{i_2}$ were set to $0$ at the beginning). Additionally, the dual constraints for the edges $\{i_1,j\},\{i_2,j\}$ are $\eta$-satisfied.

\vspace{10pt}
\noindent{\bf Setting $\alpha_{(2)}$}. Let the neighbors of $j$ be $i_1$ and $i_2$ with $x_{i_1}=1-\nicefrac{1}{2}$ and $x_{i_2}=0$. Then the updated $x_{i_1}=x_{i_2}=1-\nicefrac{1}{4}$ and notice that, up to symmetry, this is the only way for the primal variables to attain $x_i=1-\nicefrac{1}{4}$. In this case, $\Delta P=1$. Updating $\alpha_{i_1}=\alpha_{i_2}=\alpha_{(2)} = 2-2\eta$ means that the increment in $\alpha_{i_1}+\alpha_{i_2}$ is $(\alpha_{(2)} -\alpha_{(1)})+ (\alpha_{(2)} - 0) = ((2-2\eta)-(1-\eta))+((2-2\eta)-0)$ because for $i_1$, with $x_{i_1}=0$, we had $\alpha_{i_1}=1-\eta$ and for $i_2$ we had $\alpha_{i_2}=0$ before the arrival of $j$. Hence the total increment in $\alpha_{i_1}+\alpha_{i_2}$ is $3-3\eta$. Therefore, assigning $\beta_j$ to $\Delta P-\Delta\alpha_{i_1}-\Delta\alpha_{i_2}=3\eta-2$ ensures that $\beta_j$ is positive and the increment in primal is equal to the increment in the dual. Also, for the edges, the dual constraints are $\eta$-satisfied since $(2-2\eta)+(3\eta-2)=\eta$.

\vspace{10pt}
\noindent{\bf Setting $\alpha_{(3)}$}. There are two cases in this case.
\begin{enumerate}
    \item In the first case, let the neighbors of $j$ be $i_1$ and $i_2$ with $x_{i_1}=x_{i_2} = 1-\nicefrac{1}{2}$ before the arrival of $j$. According the the change in primal values, both $x_{i_1}$ and $x_{i_2}$ would be set to $1-\nicefrac{1}{8}$ after the update. Therefore, $\Delta P=2(1-\nicefrac{1}{8})-\nicefrac{1}{2}-\nicefrac{1}{2}=\nicefrac{3}{4}$. Also, $\alpha_{i_1}$ and $\alpha_{i_2}$ used to be $\alpha_{(1)}=1-\eta$ before $j$ arrived. If we update $\alpha_{i_1}$ and $\alpha_{i_2}$ to $\alpha_{(3)}$, then we can set $\beta_j$ to $\Delta P-(2\alpha_{(3)}-2(1-\eta))$ in order to ensure that the change in dual is at most the change in primal. However, after the update, the edge corresponding to $\{i_1,j\}$ must satisfy approximate dual feasibility. Hence, we must have $\beta_j+\alpha_{(3)}\geq \eta$, implying $\nicefrac{3}{4}-\alpha_{(3)}+2-2\eta\geq\eta$. Therefore $\alpha_{(3)}\leq \nicefrac{11}{4}-3\eta$.

    \item In the second case, if the two neighbors $i_1$ and $i_2$ had $x_{i_1}=1-\nicefrac{1}{4}$ and $x_{i_2}=0$ (i.e. $i_2$ not having appeared before) even then both $x_{i_1}$ and $x_{i_2}$ would be set to $1-\nicefrac{1}{8}$ after the update. It can be checked that in this case, $\Delta P=1$ and hence we can set $\alpha_{i_1}=\alpha_{i_2}=\eta$, while keeping $\beta_j=0$.
\end{enumerate}

Considering both these cases, we can ensure that $\alpha_{i_1}, \alpha_{i_2}$ will certainly be set to at least $\nicefrac{11}{4}-3\eta$ and hence $\alpha_{(3)} = \nicefrac{11}{4}-3\eta$.

\noindent{\bf Setting $\alpha_{(k)}$ for $k\geq 4$}. As we have seen till now, we can set up the $\alpha$ variables so that approximate dual feasibility is satisfied.

The next lemma gives a strategy for the remaining cases to set the dual variables so that the Dual is $\eta$-feasible. 
\begin{lemma}
\label{lemma:setting_alpha}
For every $k\geq 4$, there is a fixed $\alpha_{(k)}$ such that whenever the primal variables are set to $1-\frac{1}{2^k}$, one can set the value of the corresponding dual variables $\alpha$ such that $\alpha\geq \alpha_{(k)}$ while ensuring $\Delta D\leq \Delta P$, $\beta_j\in [0,1]$ and the $\eta$-approximate dual feasibility. More specifically, the algorithm sets $\alpha_{(k)}$ to be $\eta$ whenever $k$ is not of the form $2^m-1$ (for some integer $m$), otherwise $$\alpha_{(2^m-1)}\geq\sum_{i=1}^{m-1} 2^{m-1-i} \left( \frac{1}{2^{2^i-2}} - \frac{1}{2^{2(2^i-2) + 2}}\right) + 2^{m-1} - (2^m -1) \eta.$$
\end{lemma}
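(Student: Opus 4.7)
The plan is strong induction on $k$, with the base cases $k\in\{1,2,3\}$ already handled in the preceding analyses. Fix $k\geq 4$ and suppose the lemma holds for every $k'<k$. When $j$ arrives with neighbors $i_1,i_2$ whose prior primal values are $1-\nicefrac{1}{2^{p_1}}$ and $1-\nicefrac{1}{2^{p_2}}$, the half-half update rule forces $p_1+p_2+1=k$ and $\Delta P=\nicefrac{1}{2^{p_1}}+\nicefrac{1}{2^{p_2}}-\nicefrac{1}{2^{k-1}}$. Write $A$ for the sum of the prior values of $\alpha_{i_1}$ and $\alpha_{i_2}$, which by the inductive hypothesis is at least $\alpha_{(p_1)}+\alpha_{(p_2)}$. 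I first characterize the largest common value $\alpha^{\ast}$ we can assign to both $\alpha_{i_1}$ and $\alpha_{i_2}$: setting $\beta_j=\max(0,\eta-\alpha^{\ast})$ (the smallest $\beta_j$ preserving $\eta$-feasibility), the constraint $\Delta D\leq\Delta P$ reduces to $2\alpha^{\ast}\leq \Delta P+A$ when $\alpha^{\ast}\geq \eta$, and to $\alpha^{\ast}\leq \Delta P+A-\eta$ when $\alpha^{\ast}<\eta$. Hence $\alpha^{\ast}=\eta$ is attainable exactly when $A+\Delta P\geq 2\eta$, and otherwise the best possible is $\alpha^{\ast}=\Delta P+A-\eta$.

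Abbreviate $a_{m'}:=\alpha_{(2^{m'}-1)}$. The induction then splits on whether $k=2^m-1$ for some $m$. Suppose first that $k\neq 2^m-1$ for any $m$; I will show $A+\Delta P\geq 2\eta$ for every admissible split, establishing $\alpha_{(k)}=\eta$. There are three subcases: (i) if neither $p_\ell$ is of the form $2^{m_\ell}-1$, then the inductive bound $A=2\eta$ makes the inequality trivial; (ii) if exactly one, say $p_1=2^{m_1}-1$, the claim reduces to $a_{m_1}+\Delta P\geq \eta$, which follows from the closed-form expression; (iii) if both are, say $p_\ell=2^{m_\ell}-1$, then $k\neq 2^m-1$ forces $m_1\neq m_2$, and one must verify the tight inequality $a_{m_1}+a_{m_2}+\Delta P\geq 2\eta$ by pairing the closed forms with the defining series $\eta=1-\sum_{i\geq 1}\nicefrac{1}{2^{2^i+i-1}}$ term-by-term.

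Next suppose $k=2^m-1$. A short number-theoretic check (the equation $2^{m_1}+2^{m_2}=2^m$ forces $m_1=m_2=m-1$) shows that the only split in which both $p_\ell$ lie in $\{2^{m'}-1\}$ is $p_1=p_2=2^{m-1}-1$; in every other split the previous case analysis applies and $\alpha^{\ast}=\eta\geq a_m$ is attainable. For the critical split, however, $A+\Delta P<2\eta$ (this failure is exactly what pins down $\eta$), so we are forced to set $\alpha^{\ast}=\Delta P+A-\eta=2a_{m-1}-\eta+\nicefrac{1}{2^{2^{m-1}-2}}-\nicefrac{1}{2^{2^m-2}}$. Unrolling this linear recurrence from the base value $a_2=\nicefrac{11}{4}-3\eta$ produces exactly the stated closed form.

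The main obstacle is subcase (iii) above, where both neighbors sit at values of the form $2^{m'}-1$ with distinct indices: the inequality $a_{m_1}+a_{m_2}+\Delta P\geq 2\eta$ has minuscule slack (since $a_m\to \eta$ as $m\to\infty$), so the verification must exploit the telescoping structure of the series defining $\eta$ and match it against the closed form of $a_{m'}$. Running alongside the main induction will be the invariant $0\leq a_m<\eta$, which follows directly from the recurrence and guarantees both $\alpha^{\ast}\geq 0$ in the second regime and $\beta_j\in[0,1]$, closing all the side constraints of the lemma.
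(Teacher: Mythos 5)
Your proposal follows the same plan as the paper: strong induction, classifying the admissible splits $(p_1,p_2)$ by whether each $p_\ell$ has the form $2^{m_\ell}-1$, the recurrence $a_m = 2a_{m-1}-\eta+\frac{1}{2^{2^{m-1}-2}}-\frac{1}{2^{2^m-2}}$ at the bottleneck split $p_1=p_2=2^{m-1}-1$, and the key inequality $A+\Delta P\geq 2\eta$ elsewhere. Your bookkeeping via $\beta_j=\max(0,\eta-\alpha^\ast)$ is equivalent to the paper's $\beta_j=\Delta P-\Delta\alpha$.

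There is, however, a concrete gap in subcases (i) and (ii). You invoke ``$\alpha_{(p)}=\eta$ whenever $p$ is not of the form $2^m-1$,'' but the lemma's ``$\alpha_{(k)}=\eta$'' clause only holds for $k\geq 4$. The exceptional value $p_\ell=2$ is not of the form $2^m-1$, yet the base-case analysis gives $\alpha_{(2)}=2-2\eta<\eta$. Since the split $(2,T-3)$ is available for every $T\geq 8$, this case cannot be waved through: in subcase (ii) with $p_2=2$ the true requirement is $a_{m_1}+(2-2\eta)+\Delta P\geq 2\eta$, i.e.\ $a_{m_1}+\Delta P\geq 4\eta-2\approx 0.871$, which is strictly stronger than your stated $a_{m_1}+\Delta P\geq\eta\approx 0.718$, so your reduction silently weakens the goal and would not certify the lemma even if verified. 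Likewise $p_\ell=0$ is of the form $2^0-1$ but $\alpha_{(0)}=0$, whereas your closed form evaluates to $\tfrac12$ at $m=0$, so subcase (iii) and the $k=2^m-1$ branch must explicitly exclude $m_\ell=0$ and treat the fresh-vertex split $(0,T-1)$ directly (there $\Delta P=1$, so the check is easy). None of this is fatal --- the inequality $A+\Delta P\geq 2\eta$ does hold on all of these splits, and the paper patches the small-$p$ cases using the looser bound $\alpha_{(p)}\geq\eta-2^{-p}$ together with $\Delta P\geq 2^{-p}$ --- but as written, subcases (i) and (ii) do not establish the claim. (Worth noting: the paper's own Case 3 has a comparable soft spot, asserting that at least one of $\alpha_{(p_1)},\alpha_{(p_2)}$ equals $\eta$, which fails e.g.\ for $T=2^m+2$ with split $(2,2^m-1)$; any rigorous writeup needs to address $p_\ell\in\{0,2\}$ head-on.)
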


Before we prove this lemma, let's first see why this is enough to prove the main theorem.

\begin{proof} [Proof of \Cref{thm:integral_eta}]
It is easy to see that proving this lemma completes the proof. This is because, after the arrival of $j$, $i_1$ and $i_2$ are more likely to be matched and hence $x_{i_1}$ and $x_{i_2}$ increase. We have seen that $x_i$ may only take values of the form $1-\nicefrac{1}{2^k}$ for some integer $k$. For $k=1$ to $3$, we have explicitly checked that we can always satisfy reverse weak duality and $\eta$-approximate dual feasibility. For any larger value of $k$, the \Cref{lemma:setting_alpha} tells us that we will always be able to set up the dual variables while satisfying reverse weak duality and $\eta$-approximate deal feasibility. Hence, we shall always be able to update the dual variables in this manner, for all $k$ and therefore the Half-half Algorithm would be $\eta$-competitive.
\end{proof}

We now prove \cref{lemma:setting_alpha}.
\begin{proof}[Proof of \Cref{lemma:setting_alpha}]
The proof will be by induction on $k$. Let us assume that $\alpha_{(k)}$ follows the lemma up to some $k<T$.

We will need to check the base cases for $k=4,5,6,7$ after which the proof follows by induction. The calculations are exactly like that for $k=1,2,3$. We provide a table of values here for each $k$. This table describes how the dual variables are set for that particular $k$. Each row describes a different initial setting of $x_{i_1}$ and $x_{i_2}$ that can lead to $x_{i_1}=x_{i_2}=1-\nicefrac{1}{2^k}$. The reader may check that these updates satisfy $\Delta P\geq \Delta D$ and $\eta$-approximate dual feasibility and satisfy \Cref{lemma:setting_alpha}.

\textbf{For \boldmath$k=4$,}

\vspace{0.25cm}

\begin{tabular}{@{}ccccccc@{}}
\toprule
$x_{i_1}$         & $x_{i_2}$           & $\Delta P$                                        & Old $\alpha_{i_1}$ & Old $\alpha_{i_2}$       & New $\alpha_{i_1}=\alpha_{i_2}$ & $\beta_j$ \\ \midrule
$0$               & $1-\nicefrac{1}{8}$ & $1$                                               & $0$                & $\nicefrac{11}{4}-3\eta$ & $\eta$                          & $0$       \\
$\nicefrac{1}{2}$ & $1-\nicefrac{1}{4}$ & $\nicefrac{1}{2}+\nicefrac{1}{4}-\nicefrac{1}{8}$ & $1-\eta$           & $2-2\eta$                & $\eta$                          & $0$       \\ \bottomrule
\end{tabular}
\vspace{0.25cm}

\textbf{For \boldmath$k=5$,}
\vspace{0.25cm}

\begin{tabular}{ccccccc}
\hline
$x_{i_1}$         & $x_{i_2}$            & $\Delta P$                                                              & Old $\alpha_{i_1}$ & Old $\alpha_{i_2}$       & New $\alpha_{i_1}=\alpha_{i_2}$ & $\beta_j$ \\ \hline
$0$               & $1-\nicefrac{1}{16}$ & $1$                                                                     & $0$                & $\eta$                   & $\eta$                          & $0$       \\
$\nicefrac{1}{2}$ & $1-\nicefrac{1}{8}$  & $\nicefrac{1}{2}+\nicefrac{1}{8}-\nicefrac{1}{16}$                      & $1-\eta$           & $\nicefrac{11}{4}-3\eta$ & $\eta$                          & $0$       \\
$\nicefrac{1}{4}$ & $\nicefrac{1}{4}$    & $\nicefrac{1}{4}$+$\nicefrac{1}{4}-\nicefrac{1}{16}$ & $2-2\eta$          & $2-2\eta$                & $\eta$                          & $0$       \\ \bottomrule    
\end{tabular}
\vspace{0.25cm}

\textbf{For \boldmath$k=6$,}
\vspace{0.25cm}

\begin{tabular}{ccccccc}
\hline
$x_{i_1}$         & $x_{i_2}$            & $\Delta P$                                                              & Old $\alpha_{i_1}$ & Old $\alpha_{i_2}$       & New $\alpha_{i_1}=\alpha_{i_2}$ & $\beta_j$ \\ \hline
$0$               & $1-\nicefrac{1}{32}$ & $1$                                                                     & $0$                & $\eta$                   & $\eta$                          & $0$       \\
$\nicefrac{1}{2}$ & $1-\nicefrac{1}{16}$ & $\nicefrac{1}{2}+\nicefrac{1}{16}-\nicefrac{1}{32}$                     & $1-\eta$           & $\eta$                   & $\eta$                          & $0$       \\
$\nicefrac{1}{4}$ & $\nicefrac{1}{8}$    & $\nicefrac{1}{4}$+$\nicefrac{1}{8}\nicefrac{1}{32}$ & $2-2\eta$          & $\nicefrac{11}{4}-3\eta$ & $\eta$                          & $0$     
\\\bottomrule
\end{tabular}

\vspace{0.25cm}
\textbf{For \boldmath$k=7$,}
\vspace{0.25cm}

\begin{tabular}{ccccccc}
\hline
$x_{i_1}$         & $x_{i_2}$            & $\Delta P$                                          & Old $\alpha_{i_1}$       & Old $\alpha_{i_2}$       & New $\alpha_{i_1}=\alpha_{i_2}$ & $\beta_j$                  \\ \hline
$0$               & $1-\nicefrac{1}{64}$ & $1$                                                 & $0$                      & $\eta$                   & $\eta$                          & $0$                        \\
$\nicefrac{1}{2}$ & $1-\nicefrac{1}{32}$ & $\nicefrac{1}{2}+\nicefrac{1}{32}-\nicefrac{1}{64}$ & $1-\eta$                 & $\eta$                   & $\eta$                          & $0$                        \\
$\nicefrac{1}{4}$ & $\nicefrac{1}{16}$   & $\nicefrac{1}{4}+\nicefrac{1}{16}-\nicefrac{1}{64}$ & $2-2\eta$                & $\eta$                   & $\eta$                          & $0$                        \\
$\nicefrac{1}{8}$ & $\nicefrac{1}{8}$    & $\nicefrac{1}{8}+\nicefrac{1}{8}-\nicefrac{1}{64}$  & $\nicefrac{11}{4}-3\eta$ & $\nicefrac{11}{4}-3\eta$ & $\nicefrac{367}{64}-7\eta$      & $8\eta-\nicefrac{367}{64}$\\\bottomrule
\end{tabular}
\vspace{0.25cm}

For $T$, we can have three cases: (1) $T$ is of the form $2^m-1$ for some $m$. (2) $T$ is of the form $2^{m_1}+2^{m_2}-1$, or (3) $T$ is not of these forms.

\noindent{\bf Case 3:} We note that 
\begin{equation*}
\alpha_{(T)}=\min_{k\in \left\{0,1,2,\cdots,\frac{T-1}{2}\right\}}\left[\left(\frac{1}{2^{k}}+\frac{1}{2^{T-1-k}}-\frac{1}{2^{{T-1}}}\right)+\alpha_{(k)}+\alpha_{(T-1-k)}-\eta\right]
\end{equation*}
can be achieved by putting the entire $\Delta P$ into $\alpha$ without increasing the $\beta$. Now, considering the case we are in, we know that either $\alpha_{(k)}$ or $\alpha_{(T-1-k)}$ is $\eta$ (inductive hypothesis). Since the other $\alpha$ already had value more than $\eta-(1-x_i)$ (this can be seen from the fact that $\alpha_{(T)}$ converges to $\eta$ as $T$ tends to $\infty$ and the total value added to $\alpha_{(T)}$, at any point, is at most the increment in $x_i$) and observe that $\Delta P\geq (1-x_i)$. Therefore, the alpha can be increased to $\eta$.

\noindent{\bf Case 1:} In this case $T = 2^m-1$ for some $m$.
$$\alpha_{(2^m-1)}=\min_{k\in \left\{0,1,2,\cdots,2^{m-1}-1\right\}}\left(\frac{1}{2^{k}}+\frac{1}{2^{2^m-k-2}}-\frac{1}{2^{{2^m-2}}}\right)+\alpha_{(k)}+\alpha_{(2^m-k-2)}-\eta$$

Note that the minimum occurs when $k=2^{m-1}-1$ (on top of this, we need to show that $\Delta P-\Delta\alpha \geq 0$, so that $\beta$ can be set to a value in $[0,1]$). In this case,
\begin{align*}
    \alpha_{(2^m-1)}&=\left(\frac{1}{2^{2^{m-1}-1}}+\frac{1}{2^{2^m-2^{m-1}+1-2}}-\frac{1}{2^{{2^m-2}}}\right)+\alpha_{(2^{m-1}-1)}+\alpha_{(2^m-2^{m-1}+1-2)}-\eta\\
    & = \sum_{i=1}^{m-1} 2^{m-1-i} \left( \frac{1}{2^{2^i-2}} - \frac{1}{2^{2(2^i-2) + 2}}\right) + 2^{m-1} - (2^m -1) \eta.
\end{align*}
as required. Now, we need to show that the following quantity is non-negative.
\begin{align*}
\Delta P - \Delta \alpha &= \left(\frac{1}{2^{2^{m-1}-1}}+\frac{1}{2^{2^{m-1} -1 }}-\frac{1}{2^{{2^m-2}}}\right) - 2\cdot \left( \alpha_{(2^m-1)} - \alpha_{(2^{(m-1)}-1)}\right).\\
    & = \left(\frac{1}{2^{2^{m-1}-1}}+\frac{1}{2^{2^{m-1} -1 }}-\frac{1}{2^{{2^m-2}}}\right) - 2\cdot \left( \alpha_{(2^m-1)} - \alpha_{(2^{(m-1)}-1)}\right).\\
   & = \left(\frac{1}{2^{2^{m-1}-2}}-\frac{1}{2^{{2^m-2}}}\right) - 2\cdot \left( \alpha_{(2^m-1)} - \alpha_{(2^{(m-1)}-1)}\right).
\end{align*}

We have the following claim.
\begin{claim}\label{claim_case1}
The quantity 
    $\left(\frac{1}{2^{2^{m-1}-2}}-\frac{1}{2^{{2^m-2}}}\right) - 2\cdot \left( \alpha_{(2^m-1)} - \alpha_{(2^{(m-1)}-1)}\right)$ is non-negative for all values of $m\geq 1$.
\end{claim}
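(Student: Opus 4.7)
The plan is to reduce the claim to a much cleaner inequality by exploiting the closed form for $\alpha_{(2^m-1)}$ given in \Cref{lemma:setting_alpha}, and then recognize the remainder $\eta - \alpha_{(2^{m-1}-1)}$ as a non-negative tail series whose leading term already matches the required right-hand side.

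First I would unroll the closed form for $\alpha_{(2^m-1)}$ to obtain a short recurrence. Subtracting $2\alpha_{(2^{m-1}-1)}$ from $\alpha_{(2^m-1)}$, the two sums indexed by $i$ telescope against each other and only the $i=m-1$ term survives, yielding
$$\alpha_{(2^m-1)} - 2\alpha_{(2^{m-1}-1)} = \left(\frac{1}{2^{2^{m-1}-2}} - \frac{1}{2^{2^m-2}}\right) - \eta.$$
Plugging this identity into the expression appearing in the claim and simplifying, the claim reduces to proving the much cleaner inequality
$$\eta - \alpha_{(2^{m-1}-1)} \;\geq\; \frac{1}{2^{2^{m-1}-1}} - \frac{1}{2^{2^m-1}}.$$

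Next I would observe that the closed form for $\alpha_{(2^m-1)}$ is really a finite truncation of the series defining $\eta$. Expanding $\eta = 1 - \sum_{j\geq 1} 2^{-(2^j+j-1)}$, setting $a_j := 2^{-(2^j+j)}$, and using $\sum_{i\geq 1}(4a_i - 8a_{i+1}) = 4a_1 - 4\sum_{i\geq 2} a_i$ together with $a_1 = 1/8$, one verifies the identity
$$2\eta - 1 = \sum_{i=1}^{\infty} 2^{-i}\left(\frac{1}{2^{2^i-2}} - \frac{1}{2^{2^{i+1}-2}}\right).$$
Multiplying by $2^{m-1}$ and subtracting the finite sum inside the formula for $\alpha_{(2^m-1)}$ produces the clean tail expression
$$\eta - \alpha_{(2^m-1)} = \sum_{i=m}^{\infty} 2^{m-1-i}\left(\frac{1}{2^{2^i-2}} - \frac{1}{2^{2^{i+1}-2}}\right),$$
each of whose summands is strictly positive.

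Applying this last formula with $m$ replaced by $m-1$, the leading term (at $i = m-1$) evaluates to exactly $\frac{1}{2^{2^{m-1}-1}} - \frac{1}{2^{2^m-1}}$, matching the right-hand side of the reduced inequality term-for-term; the remaining summands contribute a non-negative slack, and the required bound follows immediately. The main obstacle is the structural observation that $\alpha_{(2^m-1)}$ is precisely $\eta$ minus a tail of an explicit positive series; once the identity for $2\eta - 1$ is in hand, the rest is routine bookkeeping, and the positivity of the omitted tail terms is exactly the slack inherent in the choice of $\eta$.
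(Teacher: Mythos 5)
Your proof is correct, and it takes a somewhat cleaner route than the paper's. The paper plugs the closed form for $\alpha_{(2^m-1)}$ directly into $\alpha_{(2^m-1)}-\alpha_{(2^{m-1}-1)}$ and grinds through several lines of re-indexing before landing on $\Delta P-\Delta\alpha=2^{m-1}\bigl(\tfrac{4}{2^{2^m+m-1}}-\sum_{i\geq m}\tfrac{2}{2^{2^i+i-1}}\bigr)$, whose non-negativity is then asserted (and still requires the small observation that the $i=m$ term eats exactly half of $\tfrac{4}{2^{2^m+m-1}}$ and the tail is dominated by the other half). You instead extract two conceptual identities that the paper leaves implicit: the recurrence $\alpha_{(2^m-1)}-2\alpha_{(2^{m-1}-1)}=\bigl(\tfrac{1}{2^{2^{m-1}-2}}-\tfrac{1}{2^{2^m-2}}\bigr)-\eta$, which collapses the claim to $\eta-\alpha_{(2^{m-1}-1)}\geq\tfrac{1}{2^{2^{m-1}-1}}-\tfrac{1}{2^{2^m-1}}$; and the tail representation $\eta-\alpha_{(2^m-1)}=\sum_{i\geq m}2^{m-1-i}\bigl(\tfrac{1}{2^{2^i-2}}-\tfrac{1}{2^{2^{i+1}-2}}\bigr)$, whose first term exactly realizes the target bound and whose remaining terms are manifestly positive. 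I verified both identities (the key input $2\eta-1=\sum_{i\geq1}2^{-i}\bigl(\tfrac{1}{2^{2^i-2}}-\tfrac{1}{2^{2^{i+1}-2}}\bigr)$ follows from the telescoping you describe, and the final expressions in your proof and the paper's coincide after re-indexing). Your version makes the slack transparent — each omitted tail term is positive, no tail-comparison estimate is needed — and it also makes clear why $\alpha_{(2^m-1)}\to\eta$, which the paper uses informally in Case 3 without proof. In short: same algebraic content, but a better-organized decomposition that buys a cleaner endgame.
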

\begin{proof}

We first simplify the last term.
\begin{align*}
    &\left( \alpha_{(2^m-1)} - \alpha_{(2^{(m-1)}-1)}\right) \\
    &=  \left(\sum_{i=1}^{m-1} 2^{m-1-i} \left( \frac{1}{2^{2^i-2}} - \frac{1}{2^{2(2^i-2) + 2}}\right) + 2^{m-1} - (2^m -1) \eta \right) \\
    & \quad\quad\quad - \left(\sum_{i=1}^{m-2} 2^{m-2-i} \left( \frac{1}{2^{2^i-2}} - \frac{1}{2^{2(2^i-2) + 2}}\right) + 2^{m-2} - (2^{m-1} -1) \eta. \right)\\
    & = \frac{1}{2}\cdot \left( \frac{1}{2^{2^{m-1}-2}} - \frac{1}{2^{2(2^{m-1}-2) + 2}}\right)  + (2^{m-1} - 2^{m-2}) - (2^m - 2^{m-1})\eta\\
    & \quad\quad\quad + \sum_{i=1}^{m-1} (2^{m-1-i} - 2^{m-2-i})\left( \frac{1}{2^{2^i-2}} - \frac{1}{2^{2(2^i-2) + 2}}\right)\\
     & = \frac{1}{2}\cdot \left( \frac{1}{2^{2^{m-1}-2}} - \frac{1}{2^{2(2^{m-1}-2) + 2}}\right)  +  2^{m-2} - 2^{m-1}\left(1-\sum_{i=1}^\infty \frac{1}{2^{2^i+i-1}} \right)\\
    & \quad\quad\quad + \sum_{i=1}^{m-1}  2^{m-2-i}\left( \frac{1}{2^{2^i-2}} - \frac{1}{2^{2(2^i-2) + 2}}\right)\\
     & = \frac{1}{2}\cdot \left( \frac{1}{2^{2^{m-1}-2}} - \frac{1}{2^{2(2^{m-1}-2) + 2}}\right)  -  2^{m-2} + 2^{m-2}\sum_{i=1}^\infty \frac{2}{2^{2^i+i-1}} \\
    & \quad\quad\quad + 2^{m-2}\cdot \sum_{i=1}^{m-1}  \frac{1}{2^{i}}\cdot  \left(\frac{1}{2^{2^i-2}} - \frac{1}{2^{2(2^i-2) + 2}}\right).
\end{align*}
Now,
\begin{align*}
    2 \left( \alpha_{(2^m-1)} - \alpha_{(2^{(m-1)}-1)}\right) & = \left( \frac{1}{2^{2^{m-1}-2}} - \frac{1}{2^{2(2^{m-1}-2) + 2}}\right)  -  2^{m-1} + 2^{m-1}\sum_{i=1}^\infty \frac{2}{2^{2^i+i-1}} \\
    & \quad\quad\quad + 2^{m-1}\cdot \sum_{i=1}^{m-1}  \frac{1}{2^{i}}\cdot  \left(\frac{1}{2^{2^i-2}} - \frac{1}{2^{2(2^i-2) + 2}}\right)\\
     & = \left( \frac{1}{2^{2^{m-1}-2}} - \frac{1}{2^{2^{m}- 2}}\right)  -  2^{m-1} + 2^{m-1}\sum_{i=1}^\infty \frac{2}{2^{2^i+i-1}} \\
    & \quad\quad\quad + 2^{m-1}\cdot \sum_{i=1}^{m-1}  \frac{1}{2^{i}}\cdot  \left(\frac{1}{2^{2^i-2}} - \frac{1}{2^{2(2^i-2) + 2}}\right).\\
    \end{align*}
    Therefore,
\begin{align*}
    \Delta P - \Delta \alpha &=   2^{m-1} - 2^{m-1}\sum_{i=1}^\infty \frac{2}{2^{2^i+i-1}} - 2^{m-1}\cdot \sum_{i=1}^{m-1}  \frac{1}{2^{i}}\cdot  \left(\frac{1}{2^{2^i-2}} - \frac{1}{2^{2(2^i-2) + 2}}\right)\\
    &= 2^{m-1} \left( 1 - \sum_{i=1}^\infty \frac{2}{2^{2^i+i-1}} - \sum_{i=1}^{m-1}  \frac{1}{2^{i}}\cdot  \left(\frac{1}{2^{2^i-2}} - \frac{1}{2^{2(2^i-2) + 2}}\right)\right)\\
   &= 2^{m-1} \left(1 -  \sum_{i=1}^\infty \frac{2}{2^{2^i+i-1}} - \sum_{i=1}^{m-1}   \left(\frac{2}{2^{2^i+i-1}} - \frac{1}{2^{(2^{i+1}+i-2)}}\right) \right).
\end{align*}
Now, we simplify the last summation as follows
\begin{align*}
     \sum_{i=1}^{m-1}   \left(\frac{2}{2^{2^i+i-1}} - \frac{1}{2^{(2^{i+1}+i-2)}}\right) &=  \sum_{i=1}^{m-1}   \left(\frac{2}{2^{2^i+i-1}} - \frac{1}{2^{(2^{(i+1)}+(i+1)-3)}}\right)\\
     &=  \sum_{i=1}^{m-1}   \left(\frac{2}{2^{2^i+i-1}} - \frac{4}{2^{(2^{(i+1)}+(i+1)-1)}}\right)\\
     &= \frac{2}{2^2} - \sum_{i=2}^{m-1} \frac{2}{2^{2^i+i-1}} - \frac{4}{2^{(2^{m}+m-1)}}
\end{align*}
Hence,
\begin{align*}
    \Delta P - \Delta \alpha &=    2^{m-1} \left(1 -  \sum_{i=1}^\infty \frac{2}{2^{2^i+i-1}} - \sum_{i=1}^{m-1}   \left(\frac{2}{2^{2^i+i-1}} - \frac{1}{2^{(2^{i+1}+i-2)}}\right) \right)\\
    & =  2^{m-1} \left(1 -  \sum_{i=1}^\infty \frac{2}{2^{2^i+i-1}} - \frac{2}{2^2} + \sum_{i=2}^{m-1} \frac{2}{2^{2^i+i-1}} + \frac{4}{2^{(2^{m}+m-1)}}\right)\\
  & =  2^{m-1} \left(1 - \frac{2}{2^2} - \sum_{i=m}^\infty \frac{2}{2^{2^i+i-1}} - \frac{2}{2^2}  + \frac{4}{2^{(2^{m}+m-1)}} \right)\\
    & =  2^{m-1} \left(\frac{4}{2^{(2^{m}+m-1)}} - \sum_{i=m}^\infty \frac{2}{2^{2^i+i-1}} \right)\geq 0.
\end{align*}
\end{proof}

Given the claim, we can see that $\Delta P-\Delta \alpha\geq 0$, allowing us to set $\beta_j\geq 0$.
\end{proof}

\noindent{\bf Case 2:} In the case when $T$ is of the form $2^{m_1}+2^{m_2}-1$ (where $m_1>m_2$), we will try to find out $\alpha_{(T)}$ using $\alpha_{(2^{m_1}-1)}$ and $\alpha_{(2^{m_2}-1)}$. For all other instances in which we end up with $\alpha_{(T)}$, at least one of the offline vertices will already have $\alpha$ set to $\eta$ so like case $1$, we are in good shape. Hence we just need to consider the case when one offline vertex, say $i_1$, had $x_{i_1}$ at $1-\frac{1}{2^{2^{m_1}-1}}$ and the other vertex, $i_2$, had $x_{i_2}$ at $1-\frac{1}{2^{2^{m_2}-1}}$.

First of all, we note that
$$\Delta P=\frac{1}{2^{2^{m_1}-1}}+\frac{1}{2^{2^{m_2}-1}}-\frac{2}{2^{2^{m_1}+2^{m_2}-1}}.$$

The plan is to put all the $\Delta P$ into both the $\alpha$ values and confirm that indeed we can make both the $\alpha$ values equal to $\eta$, without incrementing the $\beta$.

This can be achieved iff $\Delta P$ is at least $2\eta -\alpha_{(2^{m_1}-1)}-\alpha_{(2^{m_2}-1)}$. The following claim precisely shows this.

\begin{claim}\label{claim_case2}
 $\Delta P\geq 2\eta -\alpha_{(2^{m_1}-1)}-\alpha_{(2^{m_2}-1)}$.
\end{claim}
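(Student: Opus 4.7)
The plan is to use the explicit formula for $\alpha_{(2^m-1)}$ from \cref{lemma:setting_alpha} together with the defining identity $1 - \eta = \sum_{i=1}^{\infty} \frac{1}{2^{2^i+i-1}}$ to rewrite both $\alpha$ terms in a compact form, after which the required inequality becomes almost immediate.

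First, I would establish the compact identity
\[
\alpha_{(2^m-1)} \;=\; \eta \;-\; \frac{1}{2^{2^m-1}} \;+\; 2^m \sum_{i=m+1}^{\infty} \frac{1}{2^{2^i+i-1}}.
\]
This can be derived directly from the formula in \cref{lemma:setting_alpha} by (i) observing that each summand $2^{m-1-i}\!\left(\frac{1}{2^{2^i-2}} - \frac{1}{2^{2^{i+1}-2}}\right)$ equals $2^m\!\left(\frac{1}{2^{2^i+i-1}} - \frac{1}{2^{2^{i+1}+i-1}}\right)$, (ii) reindexing the second piece via $j = i+1$ and collecting terms, and (iii) using $1-\eta = \sum_{i=1}^\infty \frac{1}{2^{2^i+i-1}}$ to cancel out $-(2^m-1)\eta$ against the bulk of the sum. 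A sanity check with $m=1,2$ recovers the known values $\alpha_{(1)} = 1-\eta$ and $\alpha_{(3)} = \nicefrac{11}{4} - 3\eta$.

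Next, substituting this compact form into the desired inequality $\Delta P + \alpha_{(2^{m_1}-1)} + \alpha_{(2^{m_2}-1)} \geq 2\eta$, the terms $\pm \frac{1}{2^{2^{m_1}-1}}$ and $\pm \frac{1}{2^{2^{m_2}-1}}$ cancel against the first two terms of $\Delta P$, so the claim reduces to
\[
2^{m_1}\!\sum_{i=m_1+1}^{\infty} \frac{1}{2^{2^i+i-1}} \;+\; 2^{m_2}\!\sum_{i=m_2+1}^{\infty} \frac{1}{2^{2^i+i-1}} \;\geq\; \frac{2}{2^{2^{m_1}+2^{m_2}-1}}.
\]
To finish, I would keep only the leading $i = m_2+1$ term of the second sum on the LHS, which equals $2^{m_2} \cdot \frac{1}{2^{2^{m_2+1}+m_2}} = \frac{1}{2^{2^{m_2+1}}}$, and verify the elementary inequality $\frac{1}{2^{2^{m_2+1}}} \geq \frac{2}{2^{2^{m_1}+2^{m_2}-1}}$. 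This is equivalent to $2^{m_1} \geq 2^{m_2} + 2$, which follows from $m_1 \geq m_2 + 1$ together with $m_2 \geq 1$ (the latter holds because in Case~2 both $i_1$ and $i_2$ have appeared before, so both $p_1,p_2 \geq 1$).

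The main obstacle is Step~1: spotting the telescoping/reindexing that turns the messy sum in \cref{lemma:setting_alpha} into the clean tail expression above. Once that identity is in hand, the rest of the argument has substantial slack (only the leading term of the $m_2$-tail is used, and the entire $m_1$-tail is discarded), so there is no risk of a tight calculation going wrong, and the proof is very short.
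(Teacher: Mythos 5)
Your proof is correct and follows essentially the same route as the paper: both expand $\alpha_{(2^{m_1}-1)}+\alpha_{(2^{m_2}-1)}$ using the formula from \cref{lemma:setting_alpha} and $1-\eta=\sum_{i\geq 1}\frac{1}{2^{2^i+i-1}}$, cancel the $\frac{1}{2^{2^{m_\ell}-1}}$ terms against $\Delta P$, and reduce to showing $2^{m_1}\sum_{i>m_1}\frac{1}{2^{2^i+i-1}}+2^{m_2}\sum_{i>m_2}\frac{1}{2^{2^i+i-1}}\geq \frac{2}{2^{2^{m_1}+2^{m_2}-1}}$. Your compact rewriting $\alpha_{(2^m-1)}=\eta-\frac{1}{2^{2^m-1}}+2^m\sum_{i>m}\frac{1}{2^{2^i+i-1}}$ is a genuinely cleaner way to organize the algebra than the paper's direct expansion, and your last step (comparing the leading $i=m_2+1$ term against the right-hand side, using $m_1>m_2\geq 1$) supplies the justification for the final ``$\geq 0$'' that the paper leaves implicit.
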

\begin{proof}
We show that $\Delta P-2\eta+\alpha_{(2^{m_1}-1)}+ \alpha_{(2^{m_2}-1)}$ is non-negative as follows.
\begin{align*}
&\Delta P-2\eta+\alpha_{(2^{m_1}-1)}+ \alpha_{(2^{m_2}-1)} \\
&=\frac{1}{2^{2^{m_1}-1}}+\frac{1}{2^{2^{m_2}-1}}-\frac{2}{2^{2^{m_1}+2^{m_2}-1}}+\sum_{i=1}^{m_1-1}2^{m_1-1-i}\left(\frac{1}{2^{2^i-2}}-\frac{1}{2^{2(2^i-2)+2}}\right)+\\
&\quad\quad\quad\quad\sum_{i=1}^{m_2-1}2^{m_2-1-i}\left(\frac{1}{2^{2^i-2}}-\frac{1}{2^{2(2^i-2)+2}}\right)+2^{m_1-1}+2^{m_2-1}-(2^{m_1}+2^{m_2})\eta\\
& =\frac{1}{2^{2^{m_1}-1}}+\frac{1}{2^{2^{m_2}-1}}-\frac{2}{2^{2^{m_1}+2^{m_2}-1}}+2^{m_1-1}+2^{m_2-1}-(2^{m_1}+2^{m_2})\eta +\\
&\quad\quad\quad\quad 2^{m_1-1}\left(\frac{1}{2}-\sum_{i=2}^{m_1-1}\frac{2}{2^{2^i+i-1}}-\frac{4}{2^{(2^{m_1}+m_1-1)}}\right)+\\
&\quad\quad\quad\quad 2^{m_2-1}\left(\frac{1}{2}-\sum_{i=2}^{m_1-1}\frac{2}{2^{2^i+i-1}}-\frac{4}{2^{(2^{m_1}+m_1-1)}}\right)\\
&=-\frac{1}{2^{2^{m_1}-1}}-\frac{1}{2^{2^{m_2}-1}}-\frac{2}{2^{2^{m_1}+2^{m_2}-1}} + \\
&\quad\quad\quad\quad 2^{m_1-1}\left(1-2+2\sum_{i=1}^{\infty}\frac{1}{2^{2^i+i-1}}+\frac{1}{2}-\sum_{i=2}^{m_1-1}\frac{2}{2^{2^i+i-1}}\right)+\\
&\quad\quad\quad\quad 2^{m_2-1}\left(1-2+2\sum_{i=1}^{\infty}\frac{1}{2^{2^i+i-1}}+\frac{1}{2}-\sum_{i=2}^{m_2-1}\frac{2}{2^{2^i+i-1}}\right)\\
&=-\frac{1}{2^{2^{m_1}-1}}-\frac{1}{2^{2^{m_2}-1}}-\frac{2}{2^{2^{m_1}+2^{m_2}-1}}+2^{m_1-1}\left(\sum_{i=m_1}^{\infty}\frac{2}{2^{2^i+i-1}}\right)+\\
&\quad\quad\quad\quad 2^{m_2-1}\left(\sum_{i=m_2}^{\infty}\frac{2}{2^{2^i+i-1}}\right)\\
&\geq0
\end{align*}
\end{proof}

\section{Fractional Matching}\label{sec:frac}
In this section, we prove a lower bound on the competitive ratio of any algorithm for fractional matching and then give an online algorithm that achieves the best attainable competitive ratio.
\subsection{Lower Bound}
\begin{theorem}
    No online algorithm can attain a competitive ratio better than $0.75$ for online fractional bipartite matching.
\end{theorem}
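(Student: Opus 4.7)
The plan is to apply Yao's principle on a tiny adversarial distribution: produce two inputs on $n = 2$ offline vertices with uniform probability and show that every deterministic online fractional algorithm has expected matching size at most $3/2$ against this distribution, while the offline optimum is always $2$. Since any randomized online fractional algorithm is dominated in expectation by the deterministic algorithm obtained by averaging its random fractional output (the per-edge assignments), the $3/4$ bound will automatically transfer to randomized algorithms.

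Concretely, I would take offline vertices $\{a,b\}$ and first send an online vertex $v$ with $N(v) = \{a,b\}$, common to both inputs; the algorithm must commit to $\alpha := x_{a,v}$ and $\beta := x_{b,v}$ with $\alpha,\beta \geq 0$ and $\alpha+\beta \leq 1$. The two inputs then differ only in the final arrival: in Instance $A$, a degree-one online vertex $u$ arrives with $N(u) = \{a\}$, while in Instance $B$, with $N(u) = \{b\}$. In either instance the offline optimum is $2$ (match $v$ to the non-neighbor of $u$ and $u$ to its sole neighbor), and both instances lie in the degree-constrained class we care about.

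The capacity constraint at $a$ in Instance $A$ forces $x_{a,u} \leq 1-\alpha$, so the algorithm's total matching there is at most $(\alpha+\beta) + (1-\alpha) = 1+\beta$; symmetrically the bound in Instance $B$ is $1+\alpha$. Averaging over the uniform distribution on $\{A,B\}$ gives expected size at most $1 + \tfrac{\alpha+\beta}{2} \leq 1 + \tfrac{1}{2} = \tfrac{3}{2}$, hence expected competitive ratio at most $3/4$. Scaling the construction to $n$ vertices on each side is immediate by running $n/2$ disjoint parallel copies of this two-vertex gadget and summing contributions, so the same ratio holds for arbitrarily large inputs in $G(n,2)$.

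There is no real obstacle here: the argument collapses to a one-line LP calculation using only the capacity constraint on $v$ and on the chosen offline endpoint of $u$. The only point that merits care is justifying that randomization cannot evade the bound in the fractional setting, which is handled either by invoking Yao's principle as above (against an oblivious distribution) or directly by noting that the expected fractional output of a randomized algorithm is itself a feasible deterministic fractional assignment of the same expected size.
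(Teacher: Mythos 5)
Your construction is exactly the one the paper uses (two offline vertices, a degree-2 online vertex followed by a degree-1 online vertex), and your LP calculation supplies the details that the paper leaves to the reader with the phrase ``one can see.'' The only cosmetic difference is that you phrase the argument via Yao's principle and averaging over the two instances; since a randomized fractional algorithm's expected per-edge output is itself a feasible deterministic fractional matching (as you note), one could equally just let the adversary pick the instance that kills the larger of $\alpha,\beta$, but either way the bound and the instance are the same as in the paper.
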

\begin{proof}
    Consider a graph with two offline vertices $i_1$ and $i_2$. The first online vertex will have edges to both $i_1$ and $i_2$ while the second online vertex will have an edge to only one of the vertices. The optimal matching is of size $2$ but one can see that no online algorithm can attain a competitive ratio better than $0.75$ using this graph.
\end{proof}

\subsection{Upper Bound}
For the fractional matching, we shall consider the water-level algorithm which was an optimal algorithm for the case without the degree bound.

\begin{algorithm}[H]
  \caption{Water-Level Algorithm}\label{water}
  \begin{algorithmic}[1]

    \Procedure{$A$}{$j,S$}\Comment{Online vertex $j$ with neighborhood $S\subseteq L$, $|S|\leq 2$}
      \State Find $\ell$ such that $\sum_{i\in S}\max\{0,\ell -x_i\}$=1
      \If {$\ell\leq 1$} 
        \State Vertex $j$ can be completely matched.
        \State For all $i\in S$ with $x_i<\ell$, set $x_i=\ell$
        \Else
        \State All the neighbors of $j$ can be completely matched, but $j$ can only be partially matched.
        \State  For all $i\in S$ set $x_i=1$ 
    \EndIf
    \EndProcedure
  \end{algorithmic}
\end{algorithm}

We show that the same algorithm gives the optimal algorithm for the degree $2$ case.
\begin{theorem}
Water-level algorithm attains a competitive ratio of $0.75$ when the vertices have a degree at most $2$.
\end{theorem}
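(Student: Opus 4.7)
I would use the online primal-dual framework from \Cref{sec:prelim} with target ratio $\gamma = 3/4$. The plan is to exhibit dual updates that maintain, after every online arrival: (a) $\Delta D \leq \Delta P$; (b) $\beta_j \geq 0$; and (c) the $3/4$-approximate dual feasibility $\alpha_i + \beta_j \geq 3/4$ for every edge $(i,j)$. The competitive ratio then follows directly from the primal-dual claim in \Cref{sec:prelim}.

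A short induction shows that the water level $x_i$ of any offline vertex can only lie in the set $\{0\} \cup \{1 - 1/2^k : k \geq 1\} \cup \{1\}$: starting from $0$, whenever an arrival raises two unsaturated neighbors with values $a, b$ in this set to a common $\ell = (1+a+b)/2 \leq 1$, $\ell$ is again in the set, and the other cases simply saturate a neighbor at $1$. This lets me define $\alpha_i = f(x_i)$ only on a discrete set of values: put $f(0) = 0$, $f(1) = 3/4$, and $f(1 - 1/2^k) = 3/4 - 1/2^k$ for $k \geq 1$ (equivalently, $f(x) = \max(0, x - 1/4)$ on every achievable level).

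On each arrival of an online $j$ with neighbors $S \subseteq L$, $|S| \leq 2$, I would split into cases according to the behavior of the Water-Level algorithm: (1a) both neighbors unsaturated and $\ell = (1+a+b)/2 \leq 1$, so both rise to $\ell$ and $\Delta P = 1$; (1b) both unsaturated but $\ell > 1$, so both are driven to $1$ and $\Delta P = (1-a)+(1-b)$; (2) exactly one neighbor already at $1$, so the other rises to $1$ and $\Delta P = 1 - a$; and (3) all neighbors already at $1$, so $\Delta P = 0$. In each case I would set $\beta_j = \Delta P - \Delta\alpha$, which automatically enforces $\Delta D = \Delta P$. Routine computation using the formula for $f$ then verifies, case by case, that $\beta_j \geq 0$ and that $f(\ell) + \beta_j \geq 3/4$ (equivalently $\alpha_i + \beta_j \geq 3/4$) for both edges incident to $j$.

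The principal obstacle is isolating the correct $f$: it must be large enough that when one neighbor is already at $1$, the required increment $\Delta \alpha = f(1) - f(a) = 3/4 - f(a)$ does not exceed $\Delta P = 1-a$ (forcing $f(a) \geq a - 1/4$), yet small enough that the symmetric full-match case at $a = b = 0$ leaves a nonnegative $\beta_j$ satisfying the edge constraint (forcing $f(1/2) \leq 1/4$). These two bounds force equality at every achievable level, and the key inequality $f((1+a+b)/2) \leq f(a) + f(b) + 1/4$ then holds with equality throughout, yielding a tight analysis that matches the $0.75$ lower bound.
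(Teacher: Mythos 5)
Your proposal is correct and takes essentially the same route as the paper: an online primal-dual argument with target $3/4$, where $\alpha_i$ is determined by the current water level $x_i$ via a piecewise-linear rule that equals $\max(0, x_i - 1/4)$ at every achievable level (the paper phrases this procedurally---contribute half of each increment below level $1/2$ to $\alpha_i$ and half to $\beta_j$, and all of each increment above $1/2$ to $\alpha_i$---but it is the same assignment). Your observation that water levels live in the discrete set $\{0\}\cup\{1-2^{-k}\}\cup\{1\}$ is a clean shortcut the paper does not make explicit, though it is not strictly needed for the argument.
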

\begin{proof}
We shall use the primal-dual analysis for this. The main idea is to maintain the value of the fractional matching produced by water level and simultaneously maintain dual variables.

We already know how the primal variables are updated by water level, given the algorithm. We will now explain how the dual variables will be updated to maintain reverse weak duality and weak dual feasibility.

Let us define $x_i:=\sum_{j\in R}x_{i,j}$. The variable $x_i$ denotes how much the vertex $i$ is matched. We will maintain the $\alpha_i$'s such that, given $x_i$, we will know exactly what $\alpha_i$ is. Thus, we will know how much $\alpha_i$ increases by considering how much $x_i$ was incremented by water level. Given the increments in $\alpha_i$'s, we will know what value to set $\beta_j$ because we will maintain that the primal and dual objectives will always be the same. So, the dual objective will be increased by the same amount as the primal, which is the amount of matching done by the water level algorithm on the arrival of online vertex $j$.

Hence, first, we will describe how to maintain the $\alpha_i$'s.

\begin{tikzpicture}[scale=0.45,x=1cm,y=1cm]
\clip(-1.7,-1) rectangle (11,11);
\begin{axis}[
x=1cm,y=1cm,
axis lines=middle,
xmin=-0.5,
xmax=10.5,
ymin=-0.5,
ymax=10.5,
xtick={0,5,10},
ytick={0,2.5,...,10},yticklabel=\empty,xticklabel=\empty]
\clip(-2,-2) rectangle (12,12);
\end{axis}
\node at (5.4,-0.2) {0.5};
\node at (0,-0.1) {0};
\node at (10.4,-0.2) {1};
\node at (-0.3,2.9) {0.25};
\node at (-0.3,7.9) {0.75};
\node at (-0.2,10.4) {1};
\node at (3,-0.7) {$x_i \xrightarrow{\hspace{1cm}}$};
\node at (-1.3,4.4) {$\alpha_i$};
\node at (7.2,2.7) {$(0.5,0.25)$};
\node at (8.7,8) {$(1,0.75)$};
\node at (-1.3,6) {$\bigg\uparrow$};
\draw [line width=1pt] (0.5,0.5) -- (5,2.5);
\draw [line width=1pt] (5,2.5) -- (10,7.5);
\end{tikzpicture}

For the first half fraction of the matching, for any vertex $i\in L$, if a fractional matching of size $p$ is done between $i$ and $j\in R$, then we shall increment $\alpha_i$ by $p/2$ and $\beta_j$ by $p/2$. Whereas, if $u$ is already matched more than $0.5$, then for a matching of size $x$, we shall increment $\alpha_i$ by $x$, and we shall not increment $\beta_j$ at all.

Firstly, note that we never decrease any dual variables. So if  some dual constraint is satisfied at a given time then it will certainly be satisfied in the future. We shall now see that whenever a new vertex $j\in R$ arrives, both the newly introduced edges $(i_1,j)$ and $(i_2,j)$ satisfy approximate dual feasibility. This will complete the proof.

In order to observe the approximate dual feasibility, we shall look at a few cases based on how much $i_1$ and $i_2$ were matched right before the arrival of $j$.

\begin{enumerate}

\item {\bf $i_1$ and $i_2$ are completely unmatched.}

Notice that in this case, the water-level algorithm will match $j$ half to $i_1$ and a half to $i_2$, and the primal increases by $1$. Let us see how the dual variables are updated. Since $x_{i_1}$ and $x_{i_2}$ are both at $0.5$ we will have $\alpha_{i_1}=\alpha_{i_1}=0.25$ which adds $0.5$ to the dual objective. Hence, we can add $0.5$ to $\beta_j$.

Therefore $\alpha_{i_1}+\beta_j=\alpha_{i_2}+\beta_j=0.75$

\item {\bf $i_1$ is $x_{i_1}$ fraction matched and $i_2$ is completely unmatched}

We will assume that $i_1$ is $x_{i_1}$ fraction matched where $x_{i_1}\neq 0$. First, we note that the water-level algorithm maintains that if $x_{i_1}\neq 0$, then we must have $x_{i_1}\geq 0.5$. In other words, if a vertex $i\in L$ starts getting matched, it will be matched at least till the half-level mark. Hence we may assume $x_{i_1}\geq 0$.

In this case, water-level algorithm will first match $x_{i_1}$ fraction of vertex $i_2$ to $j$ and then $\frac{1-x_{i_1}}{2}$ to $i_1$ and $\frac{1-x_{i_1}}{2}$ to $i_2$. This will mean that $\frac{1+x_{i_1}}{2}$ will be the updated value of $x_{i_1}$ and $x_{i_2}$ after vertex $j$ is matched fractionally.

Hence, after the matching $\alpha_{i_1}=\alpha_{i_2}=0.25+\frac{x_{i_1}}{2}$ and $\beta_j=0.25$. Therefore $\alpha_{i_1}+\beta_j=\alpha_{i_2}+\beta_j=0.5+\frac{x_{i_1}}{2}\geq 0.75$ since $x_{i_1}$ was greater than $\frac{1}{2}$.

Note that the case where $i_1$ is completely unmatched but $i_2$ is partially matched is symmetric.

\item {\bf $i_1$ is $x_{i_1}$ fraction matched and $i_2$ is $x_{i_2}$ fraction matched}

In this last case, when both $i_1$ and $i_2$ were partially matched to begin with (hence, at least half matched as stated earlier), we can see that the water-level algorithm would end up matching $i_1$ and $i_2$ to $j$ - whatever fraction of $i_1$ and $i_2$ was still unmatched. After this, we will have $x_{i_1}=x_{i_2}=1$ and therefore $\alpha_{i_1}=\alpha_{i_2}=0.75$ and therefore $\alpha_{i_1}+\beta_j=\alpha_{i_2}+\beta_j=0.75$.
\end{enumerate}
\end{proof}

\section{Conclusion}
 An interesting open problem is finding the right competitive ratios for the integral online bipartite matching for (online) degree $d$ unweighted graphs.  If we denote the competitive ratios by $\mathcal{F}(d)$, for fractional, and $\mathcal{I}(d)$, for integral, respectively, then we have shown that $\mathcal{I}(2) = 0.717772\dots$. Given that $\mathcal{F}(2) = 0.75$, it would be interesting to find the values of  $\mathcal{I}(d)$ when $d\geq 3$ and compare them with $\mathcal{F}(d)$ especially given that we know both $\mathcal{I}(d)$ and $\mathcal{F}(d)$ approach \(1-\nicefrac1e\) as $d$ approaches infinity. We conjecture that $\mathcal{F}(d)$ and $\mathcal{I}(d)$ will be different for all $d\geq 2$.

It is also interesting to ask the more general question: given an algorithm that computes fractional solutions for a given class of graphs, how well can we round a fractional solution into a randomized integral solution?





{\small 
  \bibliographystyle{prahladhurl}
    \bibliography{degree2-bib}
}

\end{document}